\documentclass[letterpaper, 11pt]{article}

\usepackage{amsmath, amssymb, amsfonts, amsthm}
\usepackage{bm}
\usepackage{mathtools}
\usepackage{mathrsfs}
\usepackage{geometry}
\usepackage{setspace}
\usepackage[utf8x]{inputenc}
\usepackage{algpseudocode,algorithmicx,algorithm}
\usepackage[T1]{fontenc}
\usepackage{lmodern}
\usepackage{graphicx}
\usepackage{enumerate}
\usepackage[natural, dvipsnames, pdftex]{xcolor}
\usepackage{tikz}
\usepackage{verbatim}
\usepackage{hyperref}
\usepackage{cancel}
\usepackage[all]{xy}
\usepackage{mdframed}
\usepackage[capitalize]{cleveref}
\usepackage{todonotes}
\usepackage{algpseudocode}
\usepackage{comment}
\usepackage{booktabs}
\usepackage{tabularx}
\usepackage{array}
\usepackage{float}

\usepackage[numbers]{natbib}

\usepackage{enumitem}

\usepackage{indentfirst}
\theoremstyle{plain}
\newtheorem{thm}{Theorem}[section]
\newtheorem{lemma}[thm]{Lemma}

\theoremstyle{definition}

\newtheorem{conj}[thm]{Conjecture}
\newtheorem{ex}[thm]{Example}

\theoremstyle{remark}
\newtheorem{rmk}[thm]{Remark}

\newcommand{\F}{\mathbb{F}}

\newcommand{\cC}{\mathcal{C}}

\begin{document}

\title{Asymptotically Optimal List Size of Random Linear Codes
\thanks{
C. Yuan is with School of Computer Science, Shanghai Jiao Tong University. (Email: \href{chen_yuan@sjtu.edu.cn}{chen\_yuan@sjtu.edu.cn}) R. Zhu is with School of Computer Science, Shanghai Jiao Tong University. (Email: \href{sjtuzrq7777@sjtu.edu.cn}{sjtuzrq7777@sjtu.edu.cn})
}}

\author{Chen Yuan, Ruiqi Zhu}
\date{\today}
\maketitle

\begin{abstract}
We prove that for every fixed prime power $q$, every $p\in(0,1-1/q)$, and every $\varepsilon>0$ with $1-H_q(p)-\varepsilon>0$, a random linear code over $\mathbb{F}_q$ of rate $1-H_q(p)-\varepsilon$ is $(p,\,\left\lceil\frac{H_q(p)}{\varepsilon}\right\rceil+O_{p,q}(1))\text{-list-decodable}$ with probability at least $1-q^{-\Omega(n)}$. Guruswami, Li, Mosheiff, Resch, Silas, and Wootters showed that, for sufficiently small $\varepsilon$, random linear codes require list size at least $\left\lfloor\frac{H_q(p)}{\varepsilon}+0.99\right\rfloor,$ and conjectured that $\frac{H_q(p)}{\varepsilon}(1+o(1))$ suffices as $\varepsilon\to 0$. This conjecture was previously known for $q=2$, where the upper bound $H_2(p)/\varepsilon+2$ was established. For $q>2$, however, the best known upper bound was $C_{p,q}/\varepsilon$ for a constant $C_{p,q}$ depending on $p$ and $q$. Our result resolves the conjecture for every prime power $q$ and, in fact, establishes the sharper upper bound $\frac{H_q(p)}{\varepsilon}+O_{p,q}(1).$
\end{abstract}

\section{Introduction}
List decoding, introduced by Elias and Wozencraft~\cite{Elias1957,Wozencraft1958}, relaxes unique decoding by allowing the decoder to output a list of candidate codewords. A code $\cC\subseteq\mathbb{F}_q^n$ is $(p,L)$-list-decodable if every Hamming ball of radius $pn$ contains at most $L$ codewords of $\cC$. For fixed $q$ and $0<p<1-1/q$, the list-decoding capacity is $1-H_q(p).$ At rate $1-H_q(p)-\varepsilon$, random nonlinear codes achieve list size $O(1/\varepsilon)$ with high probability~\cite{Elias1991,ZyablovPinsker1981}. Moreover, this bound is tight up to constant factors for random codes~\cite{Rudra2009}.

Linear codes are the canonical structured code family, but their codewords are highly dependent, so the corresponding random-coding analysis is more delicate. Zyablov and Pinsker obtained a $q^{O(1/\varepsilon)}$ list-size bound for random linear codes~\cite{ZyablovPinsker1981}. For the binary alphabet, Guruswami, H{\aa}stad, Sudan, and Zuckerman proved the existence of linear codes with list size $O(1/\varepsilon)$ at rates approaching capacity~\cite{GuruswamiHastadSudanZuckerman2002}. Guruswami, H{\aa}stad, and Kopparty subsequently proved that, for every fixed prime power $q$, a random $\mathbb{F}_q$-linear code of rate $1-H_q(p)-\varepsilon$ is $\left(p,\frac{C_{p,q}}{\varepsilon}\right)\text{-list-decodable}$ with high probability, where $C_{p,q}$ is a possibly large constant depending only on $p$ and $q$~\cite{guruswami2010list}. Thus random linear codes attain the optimal $O(1/\varepsilon)$ order over every fixed alphabet.

The list-decodability of random linear codes has also been studied in other parameter regimes and in stronger variants, including the high-noise regime~\cite{CheraghchiGuruswamiVelingker2013}, average-radius list decoding~\cite{GuruswamiNarayanan2014,GuruswamiLiSinghal2026}, and list recovery~\cite{RudraWootters2018}. In large-alphabet regime near the generalized Singleton bound, Goyal and Guruswami recently obtained improved alphabet-size bounds for random linear and additive folded codes~\cite{GG26}.

\paragraph{The sharp list-size problem.}
The upper bound of~\cite{guruswami2010list}, together with the lower bound of~\cite{GuruswamiNarayanan2014}, determines the correct $\Theta_{p,q}(1/\varepsilon)$ order of the list size, but leaves the optimal leading constant open. For $q=2$, Li and Wootters had already shown that a random binary linear code of rate $1-H_2(p)-\varepsilon$ is $\left(p,\frac{H_2(p)}{\varepsilon}+2\right)\text{-list-decodable}$ with high probability~\cite{LiWootters2021}.

Guruswami, Li, Mosheiff, Resch, Silas, and Wootters later proved, for every fixed prime power $q$, the lower bound $L\ge\left\lfloor\frac{H_q(p)}{\varepsilon}+0.99\right\rfloor$ for sufficiently small $\varepsilon$~\cite{GuruswamiEtAl2022}. They also strengthened the binary result by proving that a random binary linear code of rate $1-H_2(p)-\varepsilon$ is $\left(p,\left\lfloor\frac{H_2(p)}{\varepsilon}\right\rfloor+2\right)\text{-average-radius list-decodable}$ with high probability.

Very recently, Guruswami, Li, and Singhal established an $O(1/\varepsilon)$ upper bound for average-radius list decoding over every fixed finite field~\cite{GuruswamiLiSinghal2026}. This gives an $O(1/\varepsilon)$ list-size bound even for the stronger average-radius notion.

The remaining question is therefore to determine the sharp leading constant in the list size. The lower bound of Guruswami et al.~\cite{GuruswamiEtAl2022} shows that this constant cannot be smaller than $H_q(p)$, while the binary case shows that $H_2(p)$ is indeed achievable. Motivated by these results, they proposed the following conjecture.

\begin{conj}[Guruswami et al.~\cite{GuruswamiEtAl2022}]
\label{conj:list-size}
For every fixed prime power $q$ and $0<p<1-1/q$, a random linear code over $\F_q$ of rate $1-H_q(p)-\varepsilon$ is $\left(p,\,\frac{H_q(p)}{\varepsilon}(1+o(1))\right)\text{-list-decodable}$ with high probability as $\varepsilon\to 0$.
\end{conj}
The conjecture was already known to hold for $q=2$, while the case $q>2$ remained open.

\subsection{Overview of Our Approach and Contributions}

We briefly describe the main ideas behind our proof. Suppose that there is a bad list of size $t$. By Lemma~\ref{lem:affine-reformulation}, this is equivalent to the existence of a subspace $D\subseteq \cC$ and a vector $\mathbf{z}\in\mathbb{F}_q^n$ such that $D_z:=(\mathbf{z}+D)\cap B_q^n(\mathbf{0},p)$ of size $L\geq t$. Let 
$\mathbf{x}_1,\ldots,\mathbf{x}_{L}\in D_z$ such that $\mathbf{x}_1,\ldots,\mathbf{x}_{r+1}$ are the maximally linearly independent vectors for some $r\leq t-1$. By the property of $\mathbf{z}+D$ we can write $\mathbf{x}_j=\sum_{i=1}^{r+1}a_i \mathbf{x}_i$ for $a_i\in \F_q$ with $\sum_{i=1}^{r+1} a_i=1$. Then, we consider the affine space $\mathcal{A}_{r+1}:=\{(a_1,\ldots,a_{r+1})\in\mathbb{F}_q^{r+1}:\sum_{i=1}^{r+1} a_i=1\}$ and let
$$
A=\{(a_1,\ldots,a_{r+1}): \sum_{i=1}^{r+1}a_i\mathbf{x}_i\in D_z\}\subseteq \mathcal{A}_{r+1}.
$$
We note that set $A$ must contain $r+1$ standard basis vectors $\mathbf{e}_i\in \F_q^{r+1}$.
It is clear that the set $A$ together with $\mathbf{x}_1,\ldots,\mathbf{x}_{r+1}$ completely determine a list of size at least $t$. Thus, we want to bound the size of the following set
$$
N(A):=
\left\{
(\mathbf{x}_1,\ldots,\mathbf{x}_{r+1})
\in B_q^n(\mathbf{0},p)^{r+1}:
\sum_{i=1}^{r+1}a_i\mathbf{x}_i\in B_q^n(\mathbf{0},p), \forall
(a_1,\ldots,a_{r+1})\in A
\right\}.
$$
A straightforward way to bound $N(A)$ is by bounding the property that $\sum_{i=1}^{r+1} \mathbf{x}_i\in B_q^n(\mathbf{0},p)$ for $\mathbf{x}_i\in B_q^n(\mathbf{0},p), i\in [r+1]$. However, this probability bound is weaker than what we want to show as this leads to $|N(A)|\leq |B_q^n(\mathbf{0},p)|^{r+1}2^{-\tau_{q,p}n}t$. In 

Our main technical contribution is a sharper bound on $|N(A)|$ for $|A|-(r+1)\geq q^{m+1}$ with some constant $m=O_{p,q,\eta}(1)$. In this case, we can replace $\mathbf{x}_j$ with $\mathbf{x}'_j=\beta_j\mathbf{x}_j+\sum_{i=1}^{j-1}a_{i,j}\mathbf{x}_i$ where $(a_{1,j},\ldots,a_{j-1,j},\beta_j,\mathbf{0})\in A$ for $j=1,\ldots,m$. 
Then, we bound the number of tuples $(\mathbf{x}'_1,\ldots,\mathbf{x}'_m,\mathbf{x}_{m+1},\ldots,\mathbf{x}_{r+1})$ instead. There are two cases, if there exists some $j\in [m]$ such that $\sum_{i=1}^{j-1}a_{i,j}\mathbf{x}_i$ has weight at most $\tau n$. In this case, we prove that $N(A)$ has size at most $mq^{(rH_q(p)+H_q(\tau))n}$. If it does not hold, this means that both $\mathbf{x}_j$ and $\mathbf{x}'_j$ belong to $B_q(\mathbf{0},p)$. The number of such $\mathbf{x}_j$ is upper bounded by $q^{(H_q(p)-c_\tau)n}$ by Lemma~\ref{lem:shifted-ball-intersection}. In summary, we manage to prove that $N(A)$ has size at most $q^{(rH_q(p)+\eta)n}$ for any small constant $\eta$. 

Once the size of $N(A)$ is bounded, we are already done. If $t$ is big, Theorem $10$ in \cite{guruswami2010list} will show that this happens with very small probability. If $t$ is small, 
observe that there are at most $q^{(r+1)t}$ distinct set $A$. Moreover, $\mathbf{x}_i-\mathbf{x}_1$ for $i=2,\ldots,r+1$ are $r$ linearly independent vectors and thus all of them are contained in a random linear code $\cC$ of dimension $n(1-H_q(p)-\varepsilon)$ with probability $q^{-(H_q(p)+\varepsilon)rn}$. Take a union bound over all types of $A$ and we complete the proof of Theorem \ref{thm:intro-main}.

\begin{thm}
\label{thm:intro-main}
Fix a prime power $q$ and $0<p<1-1/q$. For every fixed $\varepsilon>0$ with $1-H_q(p)-\varepsilon>0,$ let $\cC\subseteq\mathbb{F}_q^n$ be a uniformly random linear code of rate $R=1-H_q(p)-\varepsilon.$ Then, with probability at least $1-q^{-\Omega_{p,q}(\varepsilon n)},$ the code $\cC$ is $\left(p,\,\left\lceil\frac{H_q(p)}{\varepsilon}\right\rceil+O_{p,q}(1) \right)$-list-decodable.
\end{thm}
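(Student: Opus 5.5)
We follow the strategy outlined in the overview: a union bound over ``configurations'' that witness a bad list. Set $L=\lceil H_q(p)/\varepsilon\rceil+K$ for a constant $K=K_{p,q}$ to be chosen, and suppose some ball $B_q^n(\mathbf{y},p)$ contains $L+1$ codewords. By Lemma~\ref{lem:affine-reformulation} we get a subspace $D\subseteq\cC$ and a shift $\mathbf{z}$ such that $D_z=(\mathbf{z}+D)\cap B_q^n(\mathbf{0},p)$ has at least $L+1$ elements. Choose a maximal affinely independent subset $\mathbf{x}_1,\ldots,\mathbf{x}_{r+1}\in D_z$, so $r\le L$, and form the pattern $A\subseteq\mathcal{A}_{r+1}$ as in the overview. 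Then $|A|\ge L+1$, $A$ contains $\mathbf{e}_1,\ldots,\mathbf{e}_{r+1}$, and $(\mathbf{x}_1,\ldots,\mathbf{x}_{r+1})\in N(A)$. The event also requires the $r$ linearly independent vectors $\mathbf{x}_i-\mathbf{x}_1$ to lie in $\cC$, which happens with probability $q^{-(H_q(p)+\varepsilon)rn}$.

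\textbf{Large $r$.} If $r$ exceeds a threshold $T=C/\varepsilon$ (for a suitable large constant $C=C_{p,q}$), we instead invoke Theorem~10 of~\cite{guruswami2010list}. It bounds the probability that a random linear code contains that many affinely independent points in a single ball by $q^{-\Omega(\varepsilon n)}$. Hence we may assume $r\le T=O(1/\varepsilon)$, and the number of possible patterns $A$ is at most $q^{(r+1)\cdot O(r)}$, a constant independent of $n$.

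\textbf{Small $r$: two cases.} Fix a pattern $A$.
\begin{itemize}
\item If $|A|-(r+1)<q^{m+1}$, where $m=m(p,q,\eta)$ is the constant from the overview, we use the trivial bound $|N(A)|\le|B_q^n(\mathbf{0},p)|^{r+1}\le q^{(r+1)H_q(p)n}$. The expected number of bad configurations is then at most
\[
q^{(r+1)H_q(p)n-(H_q(p)+\varepsilon)rn}=q^{(H_q(p)-\varepsilon r)n}.
\]
This is $q^{-\Omega(\varepsilon n)}$ once $r\ge H_q(p)/\varepsilon+1$. In this case $r+1\ge|A|-q^{m+1}\ge L+1-q^{m+1}$, so $r\ge H_q(p)/\varepsilon+K-q^{m+1}$, and choosing $K\ge q^{m+1}+2$ suffices.
\item If $|A|-(r+1)\ge q^{m+1}$, we use the sharpened bound $|N(A)|\le q^{(rH_q(p)+\eta)n}$ from the overview. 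The expected count becomes $q^{(\eta-\varepsilon r)n}$, which is tiny as soon as $r\ge1$ and $\eta<\varepsilon/2$. One must check that $m$ stays $O_{p,q}(1)$ when $\eta$ is of order $\varepsilon$, or else treat tiny $\varepsilon$ differently; this is one delicate point. We also need $r\ge1$, which holds whenever $L\ge2$.
\end{itemize}
Summing over the $O_\varepsilon(1)$ patterns and all $r\le T$ gives failure probability $q^{-\Omega(\varepsilon n)}$.

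\textbf{Main obstacle.} The key step is the bound $|N(A)|\le q^{(rH_q(p)+\eta)n}$ when $A$ has many non-basis points. The plan is to pick, greedily, $m$ non-basis elements of $A$ whose supports are ``triangular'' after relabeling the coordinates. This is possible because $A$ is large relative to $q^{m}$: by pigeonhole, some coordinate block has many patterns. We then substitute $\mathbf{x}'_j=\beta_j\mathbf{x}_j+\sum_{i<j}a_{i,j}\mathbf{x}_i$, which is a bijective change of variables. In the resulting tuple count, either some combination $\sum_{i<j}a_{i,j}\mathbf{x}_i$ has weight at most $\tau n$, which costs at most a factor $q^{H_q(\tau)n}$ in place of a full ball factor, or every $\mathbf{x}_j$ lies in the intersection of two radius-$pn$ balls whose centers are at distance more than $\tau n$. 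In the latter case Lemma~\ref{lem:shifted-ball-intersection} saves $q^{-c_\tau n}$ per index, giving $q^{-mc_\tau n}$ overall. Choosing $\tau$ small and then $m$ large relative to $1/c_\tau$ makes the total excess over $q^{rH_q(p)n}$ at most $q^{\eta n}$. The difficulty I expect is ensuring that $\eta$ can be pushed below $\varepsilon$ while $m$, and hence $K$, remains independent of $\varepsilon$. If this fails, I would absorb the loss by using the GHK bound for $r\gtrsim 1/\varepsilon$ and accepting that $K$ depends only on $p$ and $q$ through $\tau$.
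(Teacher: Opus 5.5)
\textbf{There is a genuine gap, at exactly the point you flag as delicate.} Your architecture is the paper's: pattern sets $A$, the trivial bound when $|A|-(r+1)$ is below a threshold $M$, and the sharpened bound on $|N(A)|$ above it. But your case split cannot close this gap.

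The sharpened bound only helps if $\eta<\varepsilon r$. The argument behind it (yours, and Lemma~\ref{lem:affine-saturation}) needs $H_q(\tau)<\eta$ and $m\ge (H_q(p)-\eta)/c_\tau$ with $c_\tau=\Theta(\tau^2)$. So suppose $\eta$ must be $O(\varepsilon)$, or even only $O(\varepsilon m)$ after using that $r\ge m+2$ is automatic here (since $|A|\le q^r$). Then $\tau\to 0$ and $m\to\infty$, so $M=q^{m+1}$ grows much faster than $1/\varepsilon$ as $\varepsilon\to 0$. Your trivial-bound case requires $K\ge M+2$, so you end up with list size $H_q(p)/\varepsilon+K(\varepsilon)$ where $K(\varepsilon)\gg 1/\varepsilon$, not $O_{p,q}(1)$.

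Put differently, lists whose affine dimension $r$ is much smaller than $1/\varepsilon$ are covered by neither of your cases. The trivial bound needs $\varepsilon r>H_q(p)$. The sharpened bound with an $\varepsilon$-independent threshold needs $\varepsilon r$ to be bounded below by a constant.

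\textbf{The missing idea} is to split on affine dimension relative to list size, and to use Theorem~\ref{thm:GHK} in the \emph{low}-dimensional regime.
\begin{itemize}
\item If $r+1<t/C_0$, the span of the affine basis contains $t>C_0(r+1)$ points of $B_q^n(\mathbf{0},p)$. Theorem~\ref{thm:GHK} with $\ell=r+1$ then says at most $|B_q^n(\mathbf{0},p)|^{r+1}q^{-5n}$ tuples qualify. The union bound gives $q^{(H_q(p)-5-\varepsilon r)n+o(n)}=q^{-\Omega(n)}$, with no dependence on $\varepsilon$.
\item If $r+1\ge t/C_0$, with $t=\lceil H_q(p)/\varepsilon\rceil+c$ and $c\ge C_0$, then $\varepsilon r\ge H_q(p)/C_0$. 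This is a constant depending only on $p,q$. So the sharpened bound is needed only with $\eta=H_q(p)/(4C_0)$. Then $\tau$, $m$, $M$ and the additive constant $c>M+C_0+2$ are all independent of $\varepsilon$. The leftover sub-case $t-(r+1)<M$ is handled by the trivial bound exactly as you do.
\end{itemize}

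\textbf{Your use of GHK is inverted.} Their Theorem~10 does not bound the probability of many affinely independent codewords in a ball. It says that the span of $\ell$ random ball points rarely contains more than $C_0\ell$ ball points. So your ``large~$r$'' step is misattributed, and it is also vacuous, since $r\le L$ anyway. Your fallback of applying GHK for $r\gtrsim 1/\varepsilon$ cannot work either: in the regime $r+1\ge t/C_0$ the list is not larger than $C_0(r+1)$, so the theorem gives nothing.

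\textbf{A smaller point.} The triangular selection must be made on differences $\mathbf{a}_j-\mathbf{a}_0$, as in Lemma~\ref{lem:anchored-chain}. Without anchoring it can fail, for example when all patterns have full support. This is why Lemma~\ref{lem:triangularization} has to swap $\sum_\ell(\mathbf{a}_0)_\ell\mathbf{x}_\ell$ into the basis.
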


\section{Preliminaries}
Throughout the paper, let $q$ be a fixed prime power. For $x\in (0,1-1/q)$, define the $q$-ary entropy function by $H_q(x):= x\log_q(q-1)-x\log_q x-(1-x)\log_q(1-x).$

For a vector $\mathbf{x}=(x_1,\ldots,x_n)\in\mathbb{F}_q^n$, let $\mathrm{supp}(\mathbf{x})=\{i\in [n]: x_i\neq 0\}$ be the support set of $\mathbf{x}$. The Hamming weight $\mathrm{wt}(\mathbf{x})$ is exactly the size of its support set.  $(\mathbf{x})_i=x_i$ denote the $i$-th component of $\mathbf{x}$.
For $\mathbf{y}\in\mathbb{F}_q^n$ and $0< p< 1-1/q$, define the $q$-ary Hamming ball of radius $pn$ centered at $\mathbf{y}$ by $B_q^n(\mathbf{y},p):=\left\{\mathbf{x}\in\mathbb{F}_q^n:\mathrm{wt}(\mathbf{x}-\mathbf{y})\le pn \right\}.$ We will use the standard estimates $\left|B_q^n(\mathbf{y},p)\right|= q^{H_q(p)n-o(n)}$ for fixed $p$. A code $\cC\subseteq\mathbb{F}_q^n$ is said to be \emph{$(p,L)$-list-decodable} if $|\cC\cap B_q^n(\mathbf{y},p)|\le L$ for every $\mathbf{y}\in\mathbb{F}_q^n.$ 

Define $\mathcal{A}_{r+1}:=\{\mathbf{a}\in\mathbb{F}_q^{r+1}:\sum_{i=1}^{r+1} a_i=1\}$ be an affine space of dimension $r$. Define $\mathbf{e}_1,\ldots,\mathbf{e}_{r+1}\in \F_q^{r+1}$ to be the standard basis vector. $\mathbf{x}_1,\ldots,\mathbf{x}_s\in\mathbb{F}_q^n$ are said to be \emph{affinely independent} if $\mathbf{x}_2-\mathbf{x}_1,\ldots,\mathbf{x}_s-\mathbf{x}_1$ are linearly independent. The affine dimension of $\mathbf{x}_1,\ldots,\mathbf{x}_s$ is $\dim(\operatorname{span}\{\mathbf{x}_2-\mathbf{x}_1,\ldots,\mathbf{x}_s-\mathbf{x}_1\}).$ In particular, a set of affine dimension $r$ contains $r+1$ $\F_q$-affinely independent vectors, which we call an affine basis.

\section{Proof of Asymptotically Optimal List Size}\label{sec:RLC-list-size}
Fix $0<p<1-\frac1q$ and $\varepsilon>0$. Let $\cC\subseteq \mathbb{F}_q^n$ be a uniformly random $k$-dimensional linear code of rate $R=\frac{k}{n} = 1-H_q(p)-\varepsilon.$ Our goal is to show that there exists a constant $c_{p,q}>0$, depending only on $p$ and $q$, such that with high probability,
\[
\max_{\mathbf{y}\in\mathbb{F}_q^n}
\left|
\cC\cap B_q^n(\mathbf{y},p)
\right|
\le
\left\lceil
\frac{H_q(p)}{\varepsilon}
\right\rceil
+c_{p,q}.
\]
Equivalently, with high probability, $\cC$ is
\[
\left(
p,\,
\frac{H_q(p)}{\varepsilon}+O_{p,q}(1)
\right)
\text{-list-decodable}.
\]

We now reformulate the list decodability of codes. Specifically, we show that this is equivalent to the existence of $D\subseteq \cC$ and $\mathbf{z}\in\mathbb{F}_q^n$ such that $\mathbf{z}+D$ contains many vectors of $B_q^n(\mathbf{0},p)$.

\begin{lemma}
\label{lem:affine-reformulation}
Let $\cC\subseteq \mathbb{F}_q^n$ be a linear code and let $t\ge 2$ be an integer. If there exists $\mathbf{y}\in\mathbb{F}_q^n$ such that $B_q^n(\mathbf{y},p)\cap \cC$ has size $t$ and affine dimension $r$, then there exists a subspace $D\subseteq \cC$ of dimension $r$ and $\mathbf{z}\in\mathbb{F}_q^n$ such that $\left|(\mathbf{z}+D)\cap B_q^n(\mathbf{0},p)\right|\geq t.$
\end{lemma}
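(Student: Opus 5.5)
Translate the ball to the origin and let the affine hull of the list supply both $\mathbf{z}$ and $D$. Write $S:=B_q^n(\mathbf{y},p)\cap\cC=\{\mathbf{c}_1,\ldots,\mathbf{c}_t\}$. By the definition of affine dimension, the vectors $\mathbf{c}_2-\mathbf{c}_1,\ldots,\mathbf{c}_t-\mathbf{c}_1$ span a subspace of dimension $r$. I would set
\[
D:=\operatorname{span}\{\mathbf{c}_2-\mathbf{c}_1,\ldots,\mathbf{c}_t-\mathbf{c}_1\}.
\]
Each generator is a difference of two codewords, and $\cC$ is linear, so $D\subseteq\cC$ and $\dim D=r$.

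Next I would put $\mathbf{z}:=\mathbf{c}_1-\mathbf{y}$ and consider the $t$ vectors $\mathbf{c}_j-\mathbf{y}$ for $j\in[t]$. Each can be written as $\mathbf{c}_j-\mathbf{y}=\mathbf{z}+(\mathbf{c}_j-\mathbf{c}_1)$, so it lies in $\mathbf{z}+D$. Since $\mathbf{c}_j\in B_q^n(\mathbf{y},p)$, we have $\mathrm{wt}(\mathbf{c}_j-\mathbf{y})\le pn$, so each also lies in $B_q^n(\mathbf{0},p)$. The $\mathbf{c}_j$ are distinct, hence so are their translates, and therefore $|(\mathbf{z}+D)\cap B_q^n(\mathbf{0},p)|\ge t$.

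There is no real obstacle. The only points to check are that $D\subseteq\cC$, which uses linearity of $\cC$ to see that differences of codewords are codewords, and that the dimension of $D$ is exactly $r$, which is the definition of affine dimension.
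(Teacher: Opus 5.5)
Your proposal is correct and matches the paper's proof: it takes the same choices $D=\operatorname{span}\{\mathbf{c}_2-\mathbf{c}_1,\ldots,\mathbf{c}_t-\mathbf{c}_1\}$ and $\mathbf{z}=\mathbf{c}_1-\mathbf{y}$. It also checks the same facts, namely that $D\subseteq\cC$ by linearity, $\dim D=r$ by definition, and the $t$ distinct translates $\mathbf{c}_j-\mathbf{y}$ lie in $(\mathbf{z}+D)\cap B_q^n(\mathbf{0},p)$.
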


\begin{proof}
Let $L:=B_q^n(\mathbf{y},p)\cap \cC$ and thus $L-\mathbf{y}=B_q^n(\mathbf{0},p)\cap (\cC-\mathbf{y})$. 
Assume that $L=\{\mathbf{x}_1,\ldots,\mathbf{x}_t\}$ with affine dimension $r$, then $L-\mathbf{y}\subseteq \operatorname{span}\{\mathbf{x}_2-\mathbf{x}_1,\ldots,\mathbf{x}_t-\mathbf{x}_1\}+\mathbf{x}_1-\mathbf{y}$. Thus, it suffices to choose $D=\operatorname{span}\{\mathbf{x}_2-\mathbf{x}_1,\ldots,\mathbf{x}_t-\mathbf{x}_1\}$ with dimension $r$ and $\mathbf{z}=\mathbf{x}_1-\mathbf{y}$. 
It is clear that $|L|=|L-\mathbf{y}|$ and thus $|(D+\mathbf{z})\cap B_q^n(\mathbf{0},p)|\geq t$. 

\end{proof}

By Lemma~\ref{lem:affine-reformulation}, a list of affine dimension $r$ implies that $\cC$ contains a $r$-dimensional subspace $D$. The following lemma bound the probability that $D\subseteq \cC$ for some fixed $D$.

\begin{lemma}
\label{lem:fixed-subspace}
Let $D\subseteq \mathbb{F}_q^n$ be a $r$-dimensional subspace and let $\cC\subseteq \mathbb{F}_q^n$ be a uniformly random $k$-dimensional subspace where $r\le k$. Then
\[
\Pr_{\cC}[D\subseteq \cC]
=
\prod_{j=0}^{r-1}
\frac{q^k-q^j}{q^n-q^j}.
\]
In particular, if $r=O_{p,q,\varepsilon}(1)$ and $n\to\infty$, then $\Pr[D\subseteq \cC]=q^{-(n-k)r+o(n)}.$
\end{lemma}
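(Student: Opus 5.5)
The plan is to compute the probability exactly by a symmetry argument and then read off the asymptotics. Fix a basis $\mathbf{d}_1,\ldots,\mathbf{d}_r$ of $D$. The uniform distribution on $k$-dimensional subspaces is invariant under $\mathrm{GL}_n(\F_q)$, and $\mathrm{GL}_n(\F_q)$ acts transitively on $r$-dimensional subspaces. Hence $\Pr[D\subseteq\cC]$ does not depend on $D$. Double counting the pairs $(D,\cC)$ with $D\subseteq\cC$ then gives
\[
\Pr[D\subseteq\cC]=\frac{\#\{r\text{-dim subspaces of a fixed }k\text{-dim space}\}}{\#\{r\text{-dim subspaces of }\F_q^n\}}=\frac{\binom{k}{r}_q}{\binom{n}{r}_q}.
\]

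To evaluate this ratio, I will count ordered linearly independent $r$-tuples. In $\F_q^k$ there are $\prod_{j=0}^{r-1}(q^k-q^j)$ of them, and each $r$-dimensional subspace is spanned by exactly $\prod_{j=0}^{r-1}(q^r-q^j)$ of them. The same holds with $k$ replaced by $n$. When we take the ratio, the common factor $\prod_{j}(q^r-q^j)$ cancels, which yields the stated product formula. As an alternative route that avoids Gaussian binomials, one can generate $\cC$ as the span of a uniformly random linearly independent $k$-tuple and apply the same orbit argument. The Gaussian-binomial route is the cleanest, so I will use it.

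For the asymptotics, I will bound each factor:
\[
q^{k-n}\cdot\frac{1-q^{j-k}}{1} \;\le\; \frac{q^k-q^j}{q^n-q^j} \;\le\; \frac{q^k}{q^n-q^{r}}.
\]
Since $j<r=O(1)$ and $k,n\to\infty$, each factor equals $q^{-(n-k)}(1+o(1))$. Multiplying $r=O_{p,q,\varepsilon}(1)$ such factors gives $q^{-(n-k)r}\cdot(1+o(1))^{r}=q^{-(n-k)r+o(n)}$. In fact the error term is $O(1)$.

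None of the steps is a genuine obstacle. The only point needing care is justifying transitivity and invariance, namely that $\mathrm{GL}_n$ maps the uniform measure on $k$-subspaces to itself. This follows because $\mathrm{GL}_n$ acts bijectively on the finite set of $k$-subspaces. One also needs $r\le k$, so that the numerator is nonzero.
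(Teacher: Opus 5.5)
Your proof is correct and complete. The paper states this lemma without any proof and treats it as a standard fact, so there is no argument of theirs to compare against. Your steps all hold: $\mathrm{GL}_n(\F_q)$-invariance together with transitivity on $r$-subspaces gives the double count $\binom{k}{r}_q/\binom{n}{r}_q$, and cancelling the common factor $\prod_{j}(q^r-q^j)$ gives the product.

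The usual textbook derivation, which the product form suggests, is sequential conditioning. Fix a basis $\mathbf{d}_1,\ldots,\mathbf{d}_r$ of $D$ and put $D_j=\mathrm{span}\{\mathbf{d}_1,\ldots,\mathbf{d}_j\}$. Conditioned on $D_j\subseteq\cC$, the code $\cC$ is uniform among $k$-spaces containing $D_j$. The pointwise stabilizer of $D_j$ acts transitively on $\F_q^n\setminus D_j$. Hence
$\Pr[\mathbf{d}_{j+1}\in\cC\mid D_j\subseteq\cC]=|\cC\setminus D_j|/|\F_q^n\setminus D_j|=(q^k-q^j)/(q^n-q^j)$.
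This yields the formula one factor at a time. Your route instead goes through Gaussian binomials and needs only transitivity on $r$-subspaces plus a count of ordered bases. Both are equally elementary.

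Your sandwich bounds are also fine, and they give a bit more than you claim. Since $j\le r-1\le k-1$ and $r\le n-1$, each factor lies between $(1-1/q)\,q^{-(n-k)}$ and $\frac{q}{q-1}\,q^{-(n-k)}$. So the exponent error is $O(r)$ uniformly in $n$. That is $o(n)$ whenever $r=o(n)$, not only for $r=O(1)$.
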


Suppose that an affine coset $\mathbf{z}+D$ contains at least $t$ points of $B_q^n(\mathbf{0},p)$. Choose any $t$ distinct such points, and let $r$ denote their affine rank. Choose an affine basis $\mathbf{x}_1,\ldots,\mathbf{x}_{r+1} \in B_q^n(\mathbf{0},p)$ from these $t$ points. Every one of the $t$ points can then be written uniquely as $\sum_{i=1}^{r+1} a_i\mathbf{x}_i$ for some $(a_1,\ldots,a_{r+1})\in\mathcal{A}_{r+1}$.
Thus, these $t$ points defines a set $A$ such that
\[
A\subseteq\mathcal{A}_{r+1},
\qquad
|A|=t,
\qquad
\mathbf{e}_1,\ldots,\mathbf{e}_{r+1}\in A.
\]
 Since $|\mathcal{A}_{r+1}|=q^r$, for fixed $r$ and $t$ the sie of $A$ is at most $\binom{q^r}{\,t-r-1\,}\le q^{rt}=q^{O(t^2)}.$

For a fixed coefficient set $A$, define
\[
N(A)
:=
\left\{
(\mathbf{x}_1,\ldots,\mathbf{x}_{r+1})
\in
B_q^n(\mathbf{0},p)^{r+1}
:
\sum_{i=1}^{r+1}a_i\mathbf{x}_i
\in B_q^n(\mathbf{0},p)
\ \forall\
\mathbf{a}=(a_1,\ldots,a_{r+1})\in A
\right\}.
\]
For every affinely independent tuple $\mathbf{x}=(\mathbf{x}_1,\ldots,\mathbf{x}_{r+1})\in N(A)$, define the $r$-dimensional subspace $D_{\mathbf{x}}:=\mathrm{span}\left\{\mathbf{x}_2-\mathbf{x}_1,\ldots,\mathbf{x}_{r+1}-\mathbf{x}_1\right\}.$ Since $\mathbf{x}_1,\ldots,\mathbf{x}_{r+1}$ are affinely independent, we have $\dim D_{\mathbf{x}}=r.$ By Lemma~\ref{lem:affine-reformulation}, it suffices to bound the probability $D_{\mathbf{x}}\subseteq \cC$ and take a union bound over all tuples $(\mathbf{x}_1,\ldots,\mathbf{x}_{r+1})\in N(A)$ and $A\subseteq \F_q^{r+1}$.
Let $E_{t,r}$ be the event that there exists $\mathbf{y}\in\mathbb{F}_q^n$ such that $B_q^n(\mathbf{y},p)\cap \cC$ has size $t$ and affine dimension $r$. Then, we have
\[
\Pr_{\cC}\bigl[ E_{t,r}
\bigr] 
\le
\sum_A
\sum_{\substack{
(\mathbf{x}_1,\ldots,\mathbf{x}_{r+1})\in N(A)
}}
\Pr[D_{\mathbf{x}}\subseteq \cC]\leq \max_A|N(A)|\cdot q^{O(t^2)}\cdot q^{-(H_q(p)+\varepsilon)rn+o(n)}.
\]
The last inequality follows from Lemma~\ref{lem:fixed-subspace}. Therefore, it suffices to upper-bound $|N(A)|$ for every $A$. To do this, we first bound the size of the intersection of two Hamming ball.

\begin{lemma}
\label{lem:shifted-ball-intersection}
Fix $\tau>0$. There exists a constant $c_\tau=c_\tau(p,q)>0$ such that, for every $\mathbf{z}\in\mathbb{F}_q^n$ satisfying $\mathrm{wt}(\mathbf{z})\ge \tau n,$ we have
\[
\left|
B_q^n(\mathbf{0},p)
\cap
B_q^n(\mathbf{z},p)
\right|
\le
q^{(H_q(p)-c_\tau)n}.
\]
In particular, one may take $c_\tau=\frac{2\tau^2}{\ln q}\left(1-\frac{qp}{q-1}\right)^2.$
\end{lemma}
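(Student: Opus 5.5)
The plan is a weighted counting (probabilistic) argument: compare the uniform count on $B_q^n(\mathbf{0},p)\cap B_q^n(\mathbf{z},p)$ with a product measure, then bound the measure with Hoeffding's inequality. Write $\delta:=1-\frac{qp}{q-1}$, which is positive because $p<1-1/q$.

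Let $\mu$ be the product distribution on $\mathbb{F}_q^n$ in which each coordinate is independently $0$ with probability $1-p$ and uniform on $\mathbb{F}_q\setminus\{0\}$ otherwise. Then $\mu(\mathbf{x})=\left(\frac{p}{q-1}\right)^{w}(1-p)^{n-w}$ with $w=\mathrm{wt}(\mathbf{x})$. Since $p<1-1/q$ we have $\frac{p}{q-1}\le 1-p$, so $\mu(\mathbf{x})$ is nonincreasing in $w$. For every $\mathbf{x}$ with $\mathrm{wt}(\mathbf{x})\le pn$ this gives
\[
\mu(\mathbf{x})\ \ge\ \left(\tfrac{p}{q-1}\right)^{pn}(1-p)^{(1-p)n}\ =\ q^{-H_q(p)n}.
\]
Hence for $S:=B_q^n(\mathbf{0},p)\cap B_q^n(\mathbf{z},p)$ we obtain
\[
|S|\ \le\ q^{H_q(p)n}\mu(S)\ \le\ q^{H_q(p)n}\Pr_{\mathbf{x}\sim\mu}\bigl[\mathrm{wt}(\mathbf{x}-\mathbf{z})\le pn\bigr].
\]
This discards the constraint $\mathrm{wt}(\mathbf{x})\le pn$, which costs nothing.

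Next, compute the mean of $\mathrm{wt}(\mathbf{x}-\mathbf{z})=\sum_i \mathbf{1}[x_i\ne z_i]$, a sum of independent indicators.
\begin{itemize}
\item If $z_i=0$, the $i$-th indicator has mean $p$.
\item If $z_i\neq 0$, then $x_i=z_i$ with probability $\frac{p}{q-1}$, so the indicator has mean $1-\frac{p}{q-1}$.
\end{itemize}
With $w_z=\mathrm{wt}(\mathbf{z})\ge\tau n$, the mean is
\[
pn+w_z\left(1-p-\tfrac{p}{q-1}\right)\ =\ pn+w_z\delta\ \ge\ pn+\tau\delta n.
\]

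Finally, apply Hoeffding's inequality to the $n$ independent $[0,1]$-valued indicators. The event $\mathrm{wt}(\mathbf{x}-\mathbf{z})\le pn$ is a downward deviation of at least $\tau\delta n$ from the mean, so
\[
\Pr\bigl[\mathrm{wt}(\mathbf{x}-\mathbf{z})\le pn\bigr]\ \le\ \exp\bigl(-2(\tau\delta n)^2/n\bigr)\ =\ q^{-\frac{2\tau^2\delta^2}{\ln q}n}.
\]
Combining this with the previous display gives exactly $c_\tau=\frac{2\tau^2}{\ln q}\bigl(1-\frac{qp}{q-1}\bigr)^2$.

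The only delicate point is the first step. The inequality $\mu(\mathbf{x})\ge q^{-H_q(p)n}$ must hold uniformly on the whole ball, not just at weight exactly $pn$, and this is where the hypothesis $p<1-1/q$ (monotonicity of $\mu$ in the weight) is used. Everything else is a routine application of Hoeffding's inequality.
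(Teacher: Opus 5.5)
Your proof is correct and follows the paper's argument essentially step for step: the same product measure with $\Pr[v_i=0]=1-p$, the bound $|S|\le q^{H_q(p)n}\Pr[\mathbf{v}\in S]$ obtained by dropping the constraint $\mathrm{wt}(\mathbf{x})\le pn$, the per-coordinate mean computation giving at least $pn+\tau n\bigl(1-\frac{qp}{q-1}\bigr)$, and Hoeffding's inequality. Your monotonicity-in-weight argument for $\mu(\mathbf{x})\ge q^{-H_q(p)n}$ on the whole ball simply spells out what the paper compresses into ``since $p<1-1/q$''.
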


\begin{proof}
Define a random vector $\mathbf{v}=(v_1,\ldots,v_n)\in\mathbb{F}_q^n$ with the following distribution
\[
\Pr[v_i=0]=1-p,
\qquad
\Pr[v_i=a]=\frac{p}{q-1}
\quad\text{for every }a\in \F_q^*.
\]
Since $p<1-1/q$, thus
$\Pr[\mathbf{v}=\mathbf{x}]\ge q^{-H_q(p)n}$ for every $\mathbf{x}\in B_q^n(\mathbf{0},p)$. Consequently, for every $S\subseteq B_q^n(\mathbf{0},p)$, we have $\Pr[\mathbf{v}\in S]=\sum_{\mathbf{x}\in S}\Pr[\mathbf{v}=\mathbf{x}]\ge |S|q^{-H_q(p)n}$ and hence $|S|\le q^{H_q(p)n}\Pr[\mathbf{v}\in S].$

Now fix a vector $\mathbf{z}\in\mathbb{F}_q^n$ with $\mathrm{wt}(\mathbf{z})\ge\tau n$. For each coordinate $i$, if $z_i=0$, then $\Pr[v_i-z_i\ne 0]=p.$ If $z_i\ne 0$, then $\Pr[v_i-z_i\ne 0]=1-\frac{p}{q-1}=p+\left(1-\frac{qp}{q-1}\right).$ It follows that $\mathbb{E}\bigl[\operatorname{wt}(\mathbf{v}-\mathbf{z})\bigr]
\ge
\left[
p+
\tau
\left(
1-\frac{qp}{q-1}
\right)
\right]n.$ Therefore, Hoeffding's inequality gives
\[
\Pr\left[
\mathrm{wt}(\mathbf{v}-\mathbf{z})\le pn
\right]
\le
\exp\left(
-2\tau^2
\left(
1-\frac{qp}{q-1}
\right)^2 n
\right).
\]

Then, we have
\begin{align*}
\left|
B_q^n(\mathbf{0},p)
\cap
B_q^n(\mathbf{z},p)
\right|
&\leq q^{H_q(p)n}\Pr[\mathbf{v}\in 
B_q^n(\mathbf{0},p)
\cap
B_q^n(\mathbf{z},p)] \\
&\le
q^{H_q(p)n}
\Pr\left[
\operatorname{wt}(\mathbf{v}-\mathbf{z})\le pn
\right]
\\
&\le
q^{H_q(p)n}
\exp\left(
-2\tau^2
\left(
1-\frac{qp}{q-1}
\right)^2n
\right)
\\
&=
q^{(H_q(p)-c_\tau)n},
\end{align*}
where $c_\tau=\frac{2\tau^2}{\ln q}\left(1-\frac{qp}{q-1}\right)^2.$
\end{proof}

We next use a simple variant of the increasing-chain argument in~\cite[Lemma~12]{guruswami2010list}. In fact, we only need a weaker result which simplies our argument.

\begin{lemma}\label{lem:anchored-chain}
Let $S \subseteq \mathbb{F}_q^m$ with $|S|=L$. Then there exist $\mathbf{a}_0,\mathbf{a}_1,\ldots,\mathbf{a}_h\in S$ with $h \ge \lfloor \log_q L \rfloor$ such that $\mathrm{supp}(\mathbf{a}_j-\mathbf{a}_0) \not\subseteq\bigcup_{i<j}\mathrm{supp}(\mathbf{a}_i-\mathbf{a}_0)$ for $1\leq j\leq h$.
\end{lemma}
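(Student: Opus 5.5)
We will build the chain greedily, using a pigeonhole argument on supports: at each step we either extend the chain or learn that many points share a small support set. We fix $\mathbf{a}_0\in S$ arbitrarily and work with the translated set $S'=\{\mathbf{a}-\mathbf{a}_0:\mathbf{a}\in S\}$. This set has $|S'|=L$ and contains $\mathbf{0}$. The desired condition is invariant under this translation, so it suffices to find $\mathbf{b}_1,\ldots,\mathbf{b}_h\in S'$ with $\mathrm{supp}(\mathbf{b}_j)\not\subseteq U_{j-1}$, where $U_{j-1}:=\bigcup_{i<j}\mathrm{supp}(\mathbf{b}_i)$ and $U_0=\emptyset$. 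We then set $\mathbf{a}_j=\mathbf{b}_j+\mathbf{a}_0$.

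The construction is iterative. Suppose $\mathbf{b}_1,\ldots,\mathbf{b}_{j-1}$ have been chosen. If some $\mathbf{b}\in S'$ has $\mathrm{supp}(\mathbf{b})\not\subseteq U_{j-1}$, take $\mathbf{b}_j=\mathbf{b}$ and continue; otherwise stop, and let $h$ be the number of chosen vectors. The process terminates because each step strictly enlarges $U_j\subseteq[m]$. Distinctness of the $\mathbf{a}_j$ is automatic: each new $\mathbf{b}_j$ has a coordinate outside $U_{j-1}$, so it differs from all previous $\mathbf{b}_i$ and from $\mathbf{0}$. The chain condition holds by construction.

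The key step is lower-bounding $h$. When the process stops, every $\mathbf{b}\in S'$ is supported inside $U_h$, so $S'\subseteq\{\mathbf{v}\in\mathbb{F}_q^m:\mathrm{supp}(\mathbf{v})\subseteq U_h\}$ and therefore $L\le q^{|U_h|}$. This alone only bounds $|U_h|$, not $h$, since a single step may add many coordinates. The fix is to control how many coordinates each step adds by choosing $\mathbf{b}_j$ more carefully. I would instead count by a branching argument. At stage $j$, restricting the points of $S'$ to $U_{j-1}$ and iterating gives only $q^{|U_{j-1}|}$ patterns, which again is not directly enough.

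So I would switch to the cleaner argument: at each step, among the points $\mathbf{b}\in S'$ not yet covered (those with $\mathrm{supp}(\mathbf{b})\not\subseteq U_{j-1}$), choose a coordinate $i\notin U_{j-1}$ and a value $c\in\mathbb{F}_q$. Pass to the surviving subset $S_j$ of points that agree on coordinates in $U_{j-1}\cup\{\text{chosen coordinates}\}$ with the pattern of the picked vector. Concretely, we maintain a set $T_{j}\subseteq S'$ with $|T_j|\ge L/q^{j}$ whose elements all agree on the coordinates of $U_j$. If $|T_j|\ge 2$, two elements of $T_j$ differ, necessarily outside $U_j$. Translating anew by an element of $T_j$ would change $\mathbf{a}_0$, however.

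To avoid this, we pick the base point $\mathbf{a}_0$ at the end. We build nested sets $S=T_0\supseteq T_1\supseteq\cdots$ and coordinates $i_1,i_2,\ldots$, where $T_j$ is the largest class of $T_{j-1}$ by its value at $i_j$. Here $i_j$ is a coordinate on which $T_{j-1}$ is not constant. This gives $|T_j|\ge|T_{j-1}|/q$, so the process runs for at least $\lfloor\log_q L\rfloor$ steps. Let $\mathbf{a}_0$ be any element of the final set, and let $\mathbf{a}_j\in T_{j-1}\setminus T_j$. Then $\mathbf{a}_j$ agrees with $\mathbf{a}_0$ on $i_1,\ldots,i_{j-1}$ but differs at $i_j$. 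Meanwhile $\mathbf{a}_i-\mathbf{a}_0$ for $i<j$ vanishes at $i_j$, since $\mathbf{a}_i\in T_{i-1}\subseteq T_{j-1}$ agrees with $\mathbf{a}_0$ there. Thus $i_j\in\mathrm{supp}(\mathbf{a}_j-\mathbf{a}_0)\setminus\bigcup_{i<j}\mathrm{supp}(\mathbf{a}_i-\mathbf{a}_0)$. The main point to verify, and the only real obstacle, is this last vanishing claim: all of $T_{j-1}$ is constant on $i_j$'s value class only after the split, so I must check that $\mathbf{a}_i$, chosen from $T_{i-1}\setminus T_i$ with $i<j$, agrees with $\mathbf{a}_0$ at $i_j$. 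If it does not, I would reverse the indexing of the chain so that later-split coordinates are checked against earlier points, which is the form in the statement.
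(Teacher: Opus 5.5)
Your final construction (nested largest classes $T_0\supseteq T_1\supseteq\cdots$, base point $\mathbf{a}_0$ from the last class, one point $\mathbf{a}_j\in T_{j-1}\setminus T_j$ per split) is exactly the paper's induction unrolled. However, the chain order you actually commit to is wrong, and its justification uses a reversed inclusion. For $i<j$ we have $T_{j-1}\subseteq T_i$, and since $\mathbf{a}_i\notin T_i$, in fact $\mathbf{a}_i\notin T_{j-1}$. So nothing forces $(\mathbf{a}_i)_{i_j}=(\mathbf{a}_0)_{i_j}$. The forward order genuinely fails. Take $q=2$ and $S=\mathbb{F}_2^2$. Split on coordinate $1$, keeping $T_1=\{(0,0),(0,1)\}$, then on coordinate $2$, keeping $T_2=\{(0,0)\}$. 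With $\mathbf{a}_0=(0,0)$, $\mathbf{a}_1=(1,1)$, $\mathbf{a}_2=(0,1)$, you get $\mathrm{supp}(\mathbf{a}_2-\mathbf{a}_0)=\{2\}\subseteq\{1,2\}=\mathrm{supp}(\mathbf{a}_1-\mathbf{a}_0)$.

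So the reversal you mention as a fallback is mandatory, not optional, and it does work. For $j<j'$, both $\mathbf{a}_{j'}\in T_{j'-1}\subseteq T_j$ and $\mathbf{a}_0\in T_h\subseteq T_j$ hold, and $T_j$ is constant on coordinate $i_j$. Hence $(\mathbf{a}_{j'}-\mathbf{a}_0)_{i_j}=0$, while $(\mathbf{a}_j-\mathbf{a}_0)_{i_j}\neq 0$. The chain $\mathbf{a}_h,\mathbf{a}_{h-1},\ldots,\mathbf{a}_1$, with witness coordinate $i_j$ for $\mathbf{a}_j$, therefore satisfies the statement. Your count $|T_j|\ge L/q^j$, with the process stopping only when $|T_h|=1$, gives $h\ge\lfloor\log_q L\rfloor$. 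This is precisely what the paper's induction does. It recurses into the largest class $S_0$ and appends the point from $S\setminus S_0$ as the \emph{last} element of the chain, so the point discarded at the outermost split ends up last. The greedy attempt in your opening paragraphs plays no role and can be dropped.
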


\begin{proof}
We prove the lemma by induction on $L$. If $L<q$, then $\lfloor \log_q L \rfloor=0$, so there is nothing to prove.

Now suppose $L\ge q$. Since $S$ contains at least two distinct vectors, there is a coordinate $j$ on which not all vectors in $S$ agree. For each $\alpha\in\mathbb{F}_q$, let $S_\alpha:=\{\mathbf{x}\in S:x_j=\alpha\}.$ Choose $\alpha$ for which $|S_\alpha|$ is largest and write it as $S_0$. Since there are at most $q$ possible values of the $j$-th coordinate, $\frac{L}{q}\leq |S_0|<L$. By the induction hypothesis, there exist $\mathbf{a}_0,\mathbf{a}_1,\ldots,\mathbf{a}_h\in S_0$, where $h\ge \left\lfloor \log_q |S_0| \right\rfloor$ such that $\mathrm{supp}(\mathbf{a}_i-\mathbf{a}_0)\not\subseteq\cup_{\ell<i}\mathrm{supp}(\mathbf{a}_\ell-\mathbf{a}_0)$ for $i=1,\ldots,h$.

Choose any $\mathbf{a}_{h+1}\in S\setminus S_0$. All vectors in $S_0$ have the same $j$-th coordinate, so $(\mathbf{a}_i-\mathbf{a}_0)_j=0$ whereas $(\mathbf{a}_{h+1}-\mathbf{a}_0)_j\ne 0.$ Thus, $\mathrm{supp}(\mathbf{a}_{h+1}-\mathbf{a}_0)\not\subseteq\cup_{\ell\leq h}\mathrm{supp}(\mathbf{a}_\ell-\mathbf{a}_0)$. 
Finally, $h+1\ge 1+\left\lfloor \log_q |S_0| \right\rfloor\ge 1+\left\lfloor \log_q \frac{L}{q} \right\rfloor = \left\lfloor \log_q L \right\rfloor .$
\end{proof}

We now proceed to bounding $|N(A)|$. A trivial bound is $|N(A)| \le \bigl|B_q^n(\mathbf{0},p)\bigr|^{r+1},$ by counting the number of tuples $(\mathbf{x}_1,\ldots,\mathbf{x}_{r+1})$ in $B_q^n(\mathbf{0},p)^{r+1}$. This bound does not yield what we want. Thus, we need to exploit the fact that 
for every coefficient vector $\mathbf{a}=(a_1,\ldots,a_{r+1})\in A\setminus\{\mathbf{e}_1,\ldots,\mathbf{e}_{r+1}\}$, it holds that $\sum_{i=1}^{r+1} a_i\mathbf{x}_i \in B_q^n(\mathbf{0},p).$ We next use it to obtain a sharper bound on $|N(A)|$.

Let $S:=A\setminus\{\mathbf{e}_1,\ldots,\mathbf{e}_{r+1}\}.$ Then, the support set of every $\mathbf{a}\in S$ has size at least $2$. Indeed, if $\mathrm{supp}(\mathbf{a})=\{i\}$, then $\sum_{j=1}^{r+1} a_j=1$ implies $\mathbf{a}=\mathbf{e}_i$, contradicting $\mathbf{a}\in S$. Applying Lemma~\ref{lem:anchored-chain} to $S$, we can choose $\mathbf{a}_0,\mathbf{a}_1,\ldots,\mathbf{a}_h\in S$ with $h\ge \lfloor \log_q |S| \rfloor$ and set $\mathbf{b}_j:=\mathbf{a}_j-\mathbf{a}_0$ for $1\leq j\leq h$, so that each $\mathbf{b}_j$ with $(\mathbf{b}_j)_{i_j}\neq 0$ and $(\mathbf{b}_\ell)_{i_j}=0$ for every $\ell<j$.

\begin{lemma}
\label{lem:triangularization}   
Using the above notations and let $X:=\{\sum_{\ell=1}^{r+1}(\mathbf{a}_j)_\ell\mathbf{x}_\ell: j=0,\ldots,h\}$. Then, there exists a new basis $\mathbf{x}'_1,\ldots,\mathbf{x}'_{r+1}\in \{\mathbf{x}_1,\ldots,\mathbf{x}_{r+1}, \sum_{\ell=1}^{r+1}(\mathbf{a}_0)_\ell\mathbf{x}_\ell\}$ such that at least $h-1$ elements in $X$ can be written as
$\beta_j\mathbf{x}'_{i_j}+\mathbf{z}_j\in B_q^n(\mathbf{0},p)$  with $\beta_j\in\mathbb{F}_q^*$ and  $\mathbf{z}_j$ is a linear combination of $\mathbf{x}'_i, i\neq i_j,\ldots,i_h$.
\end{lemma}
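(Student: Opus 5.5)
We work directly with the chain from Lemma~\ref{lem:anchored-chain}. Put $\mathbf{y}_0:=\sum_{\ell}(\mathbf{a}_0)_\ell\mathbf{x}_\ell\in B_q^n(\mathbf{0},p)$. The idea is to use $\mathbf{y}_0$ as a new basis element in place of one of the original $\mathbf{x}_\ell$. Then each remaining element $\mathbf{y}_j:=\sum_\ell(\mathbf{a}_j)_\ell\mathbf{x}_\ell$ of $X$ can be written as $\mathbf{y}_0+\sum_\ell(\mathbf{b}_j)_\ell\mathbf{x}_\ell$. The triangular support structure of the $\mathbf{b}_j$ should then give the required form.

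\textbf{Choosing the swap.} Since $\mathbf{a}_0\in S$, its support has at least two coordinates. Pick some $s\in\mathrm{supp}(\mathbf{a}_0)$ and define
\[
\mathbf{x}'_s:=\mathbf{y}_0,\qquad \mathbf{x}'_\ell:=\mathbf{x}_\ell \text{ for } \ell\neq s.
\]
Because $(\mathbf{a}_0)_s\neq 0$, we can recover $\mathbf{x}_s=(\mathbf{a}_0)_s^{-1}\bigl(\mathbf{y}_0-\sum_{\ell\neq s}(\mathbf{a}_0)_\ell\mathbf{x}_\ell\bigr)$. So the span is unchanged, and $\{\mathbf{x}'_\ell\}$ is again a basis drawn from the allowed set. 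The pivot indices $i_1,\dots,i_h$ are distinct, since each $i_j$ lies outside the supports of all earlier $\mathbf{b}_\ell$. Hence at most one $j$ has $i_j=s$, and we discard that index. This is where the loss of one element in ``$h-1$'' comes from.

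\textbf{Rewriting each remaining element.} For every other $j$, substitute the expression for $\mathbf{x}_s$:
\[
\mathbf{y}_j=\mathbf{y}_0+\sum_{\ell\ne s}(\mathbf{b}_j)_\ell\mathbf{x}_\ell+(\mathbf{b}_j)_s(\mathbf{a}_0)_s^{-1}\Bigl(\mathbf{y}_0-\sum_{\ell\ne s}(\mathbf{a}_0)_\ell\mathbf{x}_\ell\Bigr).
\]
Read off the coefficient of $\mathbf{x}'_{i_j}=\mathbf{x}_{i_j}$: it is $\beta_j=(\mathbf{b}_j)_{i_j}-(\mathbf{b}_j)_s(\mathbf{a}_0)_s^{-1}(\mathbf{a}_0)_{i_j}$.

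We also need $\mathbf{z}_j$ to avoid the later pivots $i_{j+1},\dots,i_h$. Here the triangular property only gives $(\mathbf{b}_\ell)_{i_j}=0$ for $\ell<j$, which says nothing about $(\mathbf{b}_j)_{i_k}$ for $k>j$. The natural fix is to reverse the order of the chain. If we list the pivots so that the element containing pivot $i_j$ has zero coefficient on all pivots introduced after it, then $\mathbf{z}_j$ lies in the span of $\mathbf{x}'_i$ with $i\notin\{i_j,\dots,i_h\}$. Concretely, process the indices from $j=h$ downward, or equivalently relabel $j\mapsto h+1-j$. Then $\mathbf{b}_{j'}$ vanishes on the later pivots and $\mathbf{y}_{j'}=\beta\mathbf{x}'_{i_{j'}}+(\text{combination of earlier } \mathbf{x}'\text{'s})$. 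The membership $\mathbf{y}_j\in B_q^n(\mathbf{0},p)$ is automatic from the definition of $N(A)$.

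\textbf{Main obstacle.} The hard part is making the correction term from substituting $\mathbf{x}_s$ harmless. That term contributes $(\mathbf{b}_j)_s(\mathbf{a}_0)_s^{-1}(\mathbf{a}_0)_\ell\,\mathbf{x}_\ell$ for every $\ell\in\mathrm{supp}(\mathbf{a}_0)$, which can hit later pivots and can even make $\beta_j=0$. My first attempt would be to choose $s$ cleverly, for instance as a coordinate of $\mathbf{a}_0$ that is also a pivot, or one minimizing these collisions. If that fails, I would choose $s$ among coordinates where all $\mathbf{b}_j$ vanish, so that $(\mathbf{b}_j)_s=0$ and the correction disappears. Then $\mathbf{y}_j=\mathbf{y}_0+\sum_{\ell\ne s}(\mathbf{b}_j)_\ell\mathbf{x}'_\ell$ with $\beta_j=(\mathbf{b}_j)_{i_j}\neq0$, and the triangularity of the $\mathbf{b}_j$ carries over directly. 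If no such $s$ exists, we sacrifice the one pivot equal to $s$, again losing only one element.
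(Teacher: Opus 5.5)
The gap is the step you label the main obstacle. You never fix the coordinate $s$ at which $\mathbf{y}_0$ enters the basis, and that choice is the whole content of the lemma.

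\emph{The fallback fails.} A coordinate $s\in\mathrm{supp}(\mathbf{a}_0)$ (needed for the swap to be invertible) on which every $\mathbf{b}_j$ vanishes need not exist. Such an $s$ can never be a pivot, since $(\mathbf{b}_k)_{i_k}\neq 0$. So it certainly fails when $\mathrm{supp}(\mathbf{a}_0)\subseteq\{i_1,\dots,i_h\}$. Moreover, ``sacrificing the pivot equal to $s$'' only discards the single $\mathbf{y}_j$ with $i_j=s$. It does nothing about the correction terms in the remaining $\mathbf{y}_j$.

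\emph{An arbitrary pivot in $\mathrm{supp}(\mathbf{a}_0)$ also fails.} Suppose $i_1,i_2,i_3\in\mathrm{supp}(\mathbf{a}_0)$ and you take $s=i_1$. Then $\beta_2=(\mathbf{b}_2)_{i_2}-(\mathbf{b}_2)_{i_1}(\mathbf{a}_0)_{i_2}/(\mathbf{a}_0)_{i_1}$ may vanish. Whenever $(\mathbf{b}_2)_{i_1}\neq 0$, the correction also puts a nonzero coefficient on the later pivot $i_3$.

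\emph{You misread the chain condition.} The statement $(\mathbf{b}_\ell)_{i_k}=0$ for all $\ell<k$ is literally $(\mathbf{b}_j)_{i_k}=0$ for all $k>j$. So in the original order each $\mathbf{b}_j$ already vanishes on all later pivots. Relabelling $j\mapsto h+1-j$ would destroy this property rather than create it.

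The missing idea is to swap at the \emph{last} pivot lying in $\mathrm{supp}(\mathbf{a}_0)$.
\begin{itemize}
\item If $(\mathbf{a}_0)_{i_k}=0$ for every $k$, no swap is needed. Then $(\mathbf{a}_j)_{i_k}=(\mathbf{b}_j)_{i_k}$, and the original basis is already triangular.
\item Otherwise set $s^*:=\max\{k:(\mathbf{a}_0)_{i_k}\neq 0\}$, replace $\mathbf{x}_{i_{s^*}}$ by $\mathbf{y}_0$, and discard $j=s^*$.
\item For $j<s^*$, the chain gives $(\mathbf{b}_j)_{i_{s^*}}=0$. So there is no correction term at all, and $\beta_j=(\mathbf{b}_j)_{i_j}$.
\item For $j>s^*$, the correction consists of a multiple of $\mathbf{y}_0$ and a multiple of $\sum_{\ell\neq i_{s^*}}(\mathbf{a}_0)_\ell\mathbf{x}_\ell$. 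By maximality of $s^*$, $\mathbf{a}_0$ vanishes at every $i_k$ with $k>s^*$. Hence the correction neither changes $\beta_j=(\mathbf{b}_j)_{i_j}\neq 0$ nor touches $i_j,\dots,i_h$.
\end{itemize}
The new basis vector $\mathbf{y}_0=\mathbf{x}'_{i_{s^*}}$ sits at the discarded pivot. That is why its appearance in $\mathbf{z}_j$ for $j<s^*$ is harmless: Lemma~\ref{lem:affine-saturation} fixes it together with the non-pivot vectors before counting the pivot vectors. This is exactly the paper's argument.
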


\begin{proof}
Recall that $\mathbf{b}_j:=\mathbf{a}_j-\mathbf{a}_0$ and that the coordinates $i_1,\ldots,i_h$ were chosen so that $(\mathbf{b}_j)_{i_j}\neq 0$ and $(\mathbf{b}_k)_{i_j}=0$ for every $k<j$.

We first consider the case $(\mathbf{a}_0)_{i_j}=0$ for every $1\leq j\leq h$. Then, we have 
$$
\sum_{\ell=1}^{r+1}(\mathbf{a}_j)_\ell\mathbf{x}_\ell=
\sum_{\ell=1}^{r+1}(\mathbf{a}_j-\mathbf{a}_0)_\ell\mathbf{x}_\ell+\sum_{\ell=1}^{r+1}(\mathbf{a}_0)_\ell\mathbf{x}_\ell.
$$
It is clear that $i_{j+1},\ldots,i_h\notin \mathrm{supp}(\mathbf{a}_j-\mathbf{a}_0)\cup \mathrm{supp}(\mathbf{a}_0)$. Thus, we can choose $\mathbf{z}_j=\sum_{\ell=1}^{r+1}(\mathbf{a}_j)_\ell\mathbf{x}_\ell-(\mathbf{a}_j)_{i_j}\mathbf{x}_{i_j}$ to complete the construction.  


Now suppose that $(\mathbf{a}_0)_{i_j}\neq 0$ for some $j$. Let $s:=\max\{j:(\mathbf{a}_0)_{i_j}\neq 0\}$ and set $\mathbf{y}:=\sum_{\ell=1}^{r+1}(\mathbf{a}_0)_\ell\mathbf{x}_\ell.$ We replace $\mathbf{x}_{i_s}$ with $\mathbf{y}$ according to the relation $\mathbf{x}_{i_s}=\frac{1}{(\mathbf{a}_0)_{i_s}}\left(\mathbf{y}-\sum_{\ell\neq i_s}(\mathbf{a}_0)_\ell\mathbf{x}_\ell\right)$. It is clear that these $r+1$ vectors $\mathbf{y},\, \{\mathbf{x}_\ell:\ell\neq i_s\}$ all lying in $B_q^n(\mathbf{0},p)$. Let $\mathbf{x}'_i=\mathbf{x}_i$ for $i\neq i_s$ and $\mathbf{x}'_{i_s}=\mathbf{y}$.

Assume first $j<s$ and we have $(\mathbf{b}_j)_{i_s}=0.$ Therefore 
$$\sum_{\ell=1}^{r+1}(\mathbf{a}_j)_\ell\mathbf{x}_\ell=\mathbf{y}+\sum_{\ell=1}^{r+1}(\mathbf{b}_j)_\ell\mathbf{x}_\ell=\beta_j\mathbf{x}_{i_j}+\mathbf{z}_j$$ 
where $\beta_j:=(\mathbf{b}_j)_{i_j}\neq 0$ and $\mathbf{z}_j:=\mathbf{y}+\sum_{k<j}(\mathbf{b}_j)_{i_k} \mathbf{x}_{i_k}+\sum_{\ell\notin\{i_1,\ldots,i_h\}}(\mathbf{b}_j)_\ell\mathbf{x}_\ell$. It is clear that $\mathbf{z}_j$ meets the requirement.

Next we assume that $j>s$. In this case, we write
\[
\begin{aligned}
\sum_{\ell=1}^{r+1}(\mathbf{a}_j)_\ell\mathbf{x}_\ell
={}&
\left(
1+\frac{(\mathbf{b}_j)_{i_s}}{(\mathbf{a}_0)_{i_s}}
\right)\mathbf{y} +
\sum_{\ell\neq i_s}
\left(
(\mathbf{b}_j)_\ell
-
\frac{(\mathbf{b}_j)_{i_s}}{(\mathbf{a}_0)_{i_s}}
(\mathbf{a}_0)_\ell
\right)\mathbf{x}_\ell.
\end{aligned}
\]
Since $j>s$, the definition of $s$ implies $(\mathbf{a}_0)_{i_j}=0.$ Hence the coefficient of $\mathbf{x}_{i_j}$ is $(\mathbf{b}_j)_{i_j}\neq 0.$ Moreover, for every $k>j$, we have $(\mathbf{b}_j)_{i_k}=0$ and $(\mathbf{a}_0)_{i_k}=0,$ the coefficient of $\mathbf{x}_{i_k}$ is $0$. Thus, we write $\sum_{\ell=1}^{r+1}(\mathbf{a}_j)_\ell\mathbf{x}_\ell=\beta_j\mathbf{x}_{i_j}+\mathbf{z}_j$ where $\beta_j:=(\mathbf{b}_j)_{i_j}\neq 0$ and
\[
\mathbf{z}_j
={}
\left(
1+\frac{(\mathbf{b}_j)_{i_s}}{(\mathbf{a}_0)_{i_s}}
\right)\mathbf{y} +\sum_{k<j\\k\neq s}
\left(
(\mathbf{b}_j)_{i_k}
-
\frac{(\mathbf{b}_j)_{i_s}}{(\mathbf{a}_0)_{i_s}}
(\mathbf{a}_0)_{i_k}
\right)\mathbf{x}_{i_k} +
\sum_{\ell\notin\{i_1,\ldots,i_h\}}
\left(
(\mathbf{b}_j)_\ell
-
\frac{(\mathbf{b}_j)_{i_s}}{(\mathbf{a}_0)_{i_s}}
(\mathbf{a}_0)_\ell
\right)\mathbf{x}_\ell .
\]
We skip the canse $j=s$ and thus yields $h-1$ elements of $X$ meeting the requirement.

Finally, we want to show that $\mathbf{z}_j$ is nonzero. Indeed, if $\mathbf{z}_j=0$, then $\sum_{\ell=1}^{r+1}(\mathbf{a}_j)_\ell\mathbf{x}_\ell=\beta_j\mathbf{x}_{i_j}.$ Note that $\mathbf{a}_j$ belongs to the affine subspace $\mathcal{A}_{r+1}$ which forces that $\beta_j=1$. This is absurd as $\mathbf{a}_j\in A\setminus\{\mathbf{e}_1,\ldots,\mathbf{e}_{r+1}\}.$ The proof is completed.
\end{proof}

We now use Lemma~\ref{lem:triangularization}, together with the intersetion of two Hamming-ball to derive a sharper upper bound on $|N(A)|$.

\begin{lemma}
\label{lem:affine-saturation}
Fix $p,q$. For every $\eta>0$, there exists a constant $M=M(p,q,\eta)$ such that the following holds. Suppose
\[
A\subseteq \mathcal{A}_{r+1},
\qquad
\mathbf{e}_1,\ldots,\mathbf{e}_{r+1}\in A,
\qquad
|A|-(r+1)\ge M.
\]
Then, for all sufficiently large $n$, $|N(A)|\le q^{[rH_q(p)+\eta]n}.$
\end{lemma}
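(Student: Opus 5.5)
Let $S:=A\setminus\{\mathbf{e}_1,\ldots,\mathbf{e}_{r+1}\}$, so $|S|\ge M$. By Lemma~\ref{lem:anchored-chain} we obtain a chain $\mathbf{a}_0,\ldots,\mathbf{a}_h\in S$ with $h\ge\lfloor\log_q M\rfloor$. We fix a threshold $\tau>0$ and an integer $m$, both to be chosen depending on $p,q,\eta$, and we take $M:=q^{m+2}$, so that $h-1\ge m$. For each tuple $(\mathbf{x}_1,\ldots,\mathbf{x}_{r+1})\in N(A)$ we apply Lemma~\ref{lem:triangularization}. This gives a new basis $\mathbf{x}'_1,\ldots,\mathbf{x}'_{r+1}$, drawn from the $\mathbf{x}_i$ together with $\mathbf{y}=\sum_\ell(\mathbf{a}_0)_\ell\mathbf{x}_\ell$, and at least $m$ indices $j$ for which
\[
\mathbf{w}_j:=\beta_j\mathbf{x}'_{i_j}+\mathbf{z}_j\in B_q^n(\mathbf{0},p),
\]
where $\mathbf{z}_j$ involves only $\mathbf{x}'_i$ with $i\notin\{i_j,\ldots,i_h\}$. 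The map from $(\mathbf{x}_i)$ to $(\mathbf{x}'_i)$ is determined by $A$, by the chain, and by the choice of $s$. There are only $O_{r}(1)$ such choices, and each map is an invertible linear substitution. It therefore suffices to count tuples $(\mathbf{x}'_1,\ldots,\mathbf{x}'_{r+1})\in B_q^n(\mathbf{0},p)^{r+1}$ satisfying these $m$ triangular constraints, and to lose only a constant factor in doing so.

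We count in the triangular order $i_1,i_2,\ldots$. First fix all $\mathbf{x}'_i$ with $i\notin\{i_1,\ldots,i_h\}$, in at most $|B_q^n(\mathbf{0},p)|^{r+1-h}$ ways. Then choose $\mathbf{x}'_{i_1},\mathbf{x}'_{i_2},\ldots$ in turn. At step $j$ the vector $\mathbf{z}_j$ is already determined, because it depends only on earlier variables. Both $\mathbf{x}'_{i_j}$ and $\beta_j\mathbf{x}'_{i_j}=\mathbf{w}_j-\mathbf{z}_j$ must lie in balls of radius $pn$, so $\beta_j\mathbf{x}'_{i_j}\in B_q^n(\mathbf{0},p)\cap B_q^n(-\mathbf{z}_j,p)$; here we use that the ball is invariant under multiplication by the scalar $\beta_j$. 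There are two cases.

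\emph{Case 1.} If $\mathrm{wt}(\mathbf{z}_j)\ge\tau n$, Lemma~\ref{lem:shifted-ball-intersection} gives at most $q^{(H_q(p)-c_\tau)n}$ choices for $\mathbf{x}'_{i_j}$.

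\emph{Case 2.} If $\mathrm{wt}(\mathbf{z}_j)<\tau n$, we cannot gain at step $j$. Instead we use the fact that $\mathbf{z}_j$ is a nonzero linear combination of earlier variables with known coefficients, so it lies within distance $\tau n$ of $\mathbf{0}$. The plan here is to solve for one earlier variable appearing in $\mathbf{z}_j$ with nonzero coefficient; such a variable exists by the last paragraph of Lemma~\ref{lem:triangularization}. That variable is then determined by the others up to a vector of weight $<\tau n$. This replaces a factor $q^{H_q(p)n}$ by $q^{H_q(\tau)n+o(n)}$, which is exactly the "$mq^{(rH_q(p)+H_q(\tau))n}$" bound from the overview.

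We branch over which of the $m$ steps, if any, is the first one to fall in Case 2; there are at most $m+1$ branches. If some step is in Case 2, the total count is at most $q^{(rH_q(p)+H_q(\tau))n+o(n)}$. If all $m$ steps are in Case 1, the count is at most $q^{((r+1)H_q(p)-mc_\tau)n+o(n)}$. We then choose $\tau$ so that $H_q(\tau)\le\eta/2$, and after that choose $m$ so that $mc_\tau\ge H_q(p)$. This gives $|N(A)|\le O(m)\,q^{(rH_q(p)+\eta/2)n+o(n)}\le q^{(rH_q(p)+\eta)n}$ for large $n$, and $M=q^{m+2}$ depends only on $p,q,\eta$.

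The main obstacle is Case 2. Once a variable is eliminated using $\mathrm{wt}(\mathbf{z}_j)<\tau n$, it is no longer free, and we must make sure that its own ball constraint is not double-counted and that the triangular structure for the later steps survives. To keep this clean, I would run the branches as follows: at the first Case 2 step, abandon all further constraints and simply bound the remaining variables trivially. Each of the $r+1$ variables then contributes at most $q^{H_q(p)n}$, except the eliminated one, which contributes $q^{H_q(\tau)n}$. The resulting bound is $q^{(rH_q(p)+H_q(\tau))n+o(n)}$ regardless of what happened at earlier steps, since dropping the Case 1 gains only loosens the bound.
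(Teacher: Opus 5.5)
Your proposal is correct and follows the paper's proof essentially step for step: the same triangularization from the anchored chain, the same split according to whether some $\mathbf{z}_j$ has weight at most $\tau n$ (solve for a variable with nonzero coefficient in $\mathbf{z}_j$, giving at most $m\,q^{[rH_q(p)+H_q(\tau)]n}$ tuples) or all $\mathbf{z}_j$ are heavy (apply Lemma~\ref{lem:shifted-ball-intersection} $m$ times in triangular order), and the same parameter choices ($H_q(\tau)$ a small fraction of $\eta$, $mc_\tau\ge H_q(p)$, $M$ exponential in $m$).

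One small correction: for $j<s$, the proof of Lemma~\ref{lem:triangularization} actually gives $\mathbf{z}_j=\mathbf{y}+\cdots$ with $\mathbf{y}=\mathbf{x}'_{i_s}$ and $s>j$, so in your order $i_1,i_2,\ldots$ the vector $\mathbf{z}_j$ is not yet determined at step $j$; since step $s$ carries no constraint anyway, fix $\mathbf{x}'_{i_s}$ at the outset together with the non-chain variables, which is what the paper's relabeling implicitly does by fixing everything except the $m$ constrained variables first.
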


\begin{proof}
Choose $0<\tau<p$ sufficiently small so that $H_q(\tau)<\frac{\eta}{4}.$ By Lemma~\ref{lem:shifted-ball-intersection}, there exists a constant $c_\tau>0$ such that $|B_q^n(\mathbf{0},p)\cap B_q^n(\mathbf{z},p)|\le q^{(H_q(p)-c_\tau)n}$ whenever $\mathrm{wt}(\mathbf{z})\ge \tau n$. Choose an integer $m$ such that $mc_\tau\ge H_q(p)$ and set $M:=q^{m+1}.$

Recall that $S:=A\setminus\{\mathbf{e}_1,\ldots,\mathbf{e}_{r+1}\}.$ Since $|S|=|A|-(r+1)\ge M,$ Lemma~\ref{lem:anchored-chain} gives $\mathbf{a}_0,\mathbf{a}_1,\ldots,\mathbf{a}_h\in S$ and $h\ge \lfloor \log_q |S| \rfloor\ge m+1$ such that, writing $\mathbf{b}_j:=\mathbf{a}_j-\mathbf{a}_0,$ for each $1\le j\le h$ there is a coordinate $i_j$ satisfying $(\mathbf{b}_j)_{i_j}\neq 0$ and $(\mathbf{b}_j)_{i_k}=0$ for every $k>j$. By Lemma~\ref{lem:triangularization}, we can find a basis $\mathbf{x}'_1,\ldots,\mathbf{x}'_{r+1}\in B_q^n(\mathbf{0},p)$ such that there are at least $h-1\ge m$ elements of form $\sum_{\ell=1}^{r+1}(\mathbf{a}_i)_\ell \mathbf{x}_i$ to be represented by $\beta_j\mathbf{x}'_{i_j}+\mathbf{z}_j \in B_q^n(\mathbf{0},p)$ where $\beta_j\in\mathbb{F}_q^*$ and $\mathbf{z}_j$ is a linear combination of $\mathbf{x}'_{i}$ with $i\neq \{i_j,\ldots,i_h\}$. After relabeling the basis, we may write them as $\beta_j\mathbf{x}'_j+\mathbf{z}_j\in B_q^n(\mathbf{0},p)$ for $1\le j\le m$. 
Note that $\mathbf{x}_i, i\in [r+1]$ uniquely determine $\mathbf{x}'_i, i\in [r+1]$ and vice versa by Lemma \ref{lem:triangularization}.

We divide the tuples in $N(A)$ into two cases.

\medskip
\noindent
\textbf{Case 1: $\mathrm{wt}(\mathbf{z}_j)\le \tau n$ for some $1\le j\le m$.}

Fix an index $j$ and write $\mathbf{z}_j=\sum_{\ell=1}^{r+1}\gamma_\ell\mathbf{x}'_\ell.$ Since $\mathbf{z}_j$ is nonzero, there is some $\ell$ such that $\gamma_\ell\neq 0$. Since $\mathrm{wt}(\mathbf{z}_j)\le \tau n,$ we may first choose $\mathbf{z}_j\in B_q^n(\mathbf{0},\tau)$ and then choose the other $r$ variables $\mathbf{x}'_k\,(k\neq \ell)$ from $B_q^n(\mathbf{0},p)$. After that, we have $\mathbf{x}'_\ell=\gamma_\ell^{-1}\left(\mathbf{z}_j-\sum_{k\neq \ell}\gamma_k\mathbf{x}'_k\right).$ Therefore, the number of such tuples is at most
\[
\left|B_q^n(\mathbf{0},\tau)\right|
\left|B_q^n(\mathbf{0},p)\right|^r
\le
q^{[rH_q(p)+H_q(\tau)]n}.
\]
Since $j$ range from $1$ to $m$, the total number of tuples is at most $m q^{[rH_q(p)+H_q(\tau)]n}.$ Since $m$ is independent of $n$ and $H_q(\tau)<\frac{\eta}{4}$, for all sufficiently large $n$, we obtain $m q^{[rH_q(p)+H_q(\tau)]n}\le q^{[rH_q(p)+\eta/2]n}.$

\medskip
\noindent
\textbf{Case 2: $\mathrm{wt}(\mathbf{z}_j)>\tau n$ for every $1\le j\le m$.}

First fix all vectors in $\mathbf{x}'_1,\ldots,\mathbf{x}'_{r+1}$ other than $\mathbf{x}'_1,\ldots,\mathbf{x}'_m.$ There are at most $|B_q^n(\mathbf{0},p)|^{r+1-m}$ ways to do this.

We then consider $\mathbf{x}'_1,\mathbf{x}'_2,\ldots,\mathbf{x}'_m$ in this order. We start from $j=1$ to $j=m$. 
Due to the construction of $\mathbf{z}_j$, it is completely determined by $\mathbf{x}_i, i<j$ and $\mathbf{x}_i, i>m$. This means, $\mathbf{z}_j$ is fixed when we count the number of $\mathbf{x}'_j$. Thus, Since $\mathbf{x}_j$ meets the requirement that  $\mathbf{x}_j\in B_q^n(\mathbf{0},p)$ and $\beta_j\mathbf{x}_j+\mathbf{z}_j\in B_q^n(\mathbf{0},p)$, Lemma~\ref{lem:shifted-ball-intersection} therefore shows that there are at most $q^{(H_q(p)-c_\tau)n}$ possible choices for $\mathbf{x}_j$.

Applying this bound successively for $j=1,\ldots,m$, the number of tuples in this case is at most $|B_q^n(\mathbf{0},p)|^{r+1-m}q^{m(H_q(p)-c_\tau)n}.$ Using $|B_q^n(\mathbf{0},p)|\le q^{H_q(p)n},$ we obtain
\[
\begin{aligned}
|B_q^n(\mathbf{0},p)|^{r+1-m}
q^{m(H_q(p)-c_\tau)n}
\le
q^{[(r+1)H_q(p)-mc_\tau]n}\le
q^{rH_q(p)n},
\end{aligned}
\]
where the last inequality follows from $mc_\tau\ge H_q(p).$

Combining the two cases, for all sufficiently large $n$, $|N(A)|\le q^{[rH_q(p)+\eta]n}.$

\end{proof}

The lemma above shows that, once the size of $A$ is large enough, the number of tuples in $N(A)$ can be bounded by $|N(A)|\le q^{[rH_q(p)+\eta]n}.$ To complete the proof of the main theorem, we will also use the following result in~\cite{guruswami2010list}.

\begin{thm}[{\cite[Theorem~10]{guruswami2010list}}]
\label{thm:GHK}
For every prime power $q$ and every $p\in(0,1-1/q)$, there is a constant $C_0>1$, such that for all $n$ large enough and all $\ell=o(\sqrt{n})$, if $\mathbf{X}_1,\ldots,\mathbf{X}_\ell$ are picked independently and uniformly at random from $B_n^q(\mathbf{0},p)$, then
\[
\Pr\left[
\left|
\mathrm{span}(\{\mathbf{X}_1,\ldots,\mathbf{X}_\ell\})
\cap B_n^q(\mathbf{0},p)
\right|
>
C_0\cdot\ell
\right]
\le q^{-5n}.
\]
\end{thm}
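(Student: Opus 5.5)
The plan is a union bound over short ``chains'' of coefficient vectors, using the fact that a combination involving \emph{two} fresh independent ball points lands in the ball only with exponentially small probability. Since the statement is quoted from~\cite{guruswami2010list}, I would reconstruct their argument along these lines.

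\textbf{Step 1: a two-fresh-vector estimate.} Let $\mathbf{X},\mathbf{X}'$ be independent and uniform on $B_q^n(\mathbf{0},p)$, and let $\beta,\beta'\in\mathbb{F}_q^*$. I would show that for every fixed $\mathbf{w}\in\mathbb{F}_q^n$,
\[
\Pr\left[\beta\mathbf{X}+\beta'\mathbf{X}'+\mathbf{w}\in B_q^n(\mathbf{0},p)\right]\le q^{-\delta n}
\]
for some $\delta=\delta(p,q)>0$. One fresh vector does not suffice: by Lemma~\ref{lem:shifted-ball-intersection}, $\beta\mathbf{X}+\mathbf{w}$ stays in the ball with non-negligible probability when $\mathbf{w}$ has small weight. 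With two fresh vectors the argument goes as follows.
\begin{itemize}
\item Condition on $\mathbf{X}'$; then $\beta\mathbf{X}$ is shifted by $\mathbf{u}=\beta'\mathbf{X}'+\mathbf{w}$.
\item If $\mathrm{wt}(\mathbf{u})\ge\tau n$, Lemma~\ref{lem:shifted-ball-intersection} together with $|B_q^n(\mathbf{0},p)|=q^{H_q(p)n-o(n)}$ bounds the conditional probability by $q^{-c_\tau n+o(n)}$.
\item The event $\mathrm{wt}(\mathbf{u})<\tau n$ forces $\mathbf{X}'\in B_q^n(-\mathbf{w}/\beta',\tau)$. This has probability at most $q^{(H_q(\tau)-H_q(p))n+o(n)}$, which is exponentially small once $H_q(\tau)<H_q(p)$.
\end{itemize}
Choosing $\tau$ small gives $\delta$.

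\textbf{Step 2: a combinatorial chain.} Suppose the span contains more than $C_0\ell$ ball points. At most $(q-1)\ell$ of them are of the form $\alpha\mathbf{X}_i$. The remaining ones are $\sum_i a_i\mathbf{X}_i$ with coefficient vectors $\mathbf{a}\in\mathbb{F}_q^\ell$ of support size at least $2$. From this set $S$ of coefficient vectors I would extract a ``$2$-increasing chain'' $\mathbf{a}_1,\ldots,\mathbf{a}_h$: each $\mathbf{a}_j$ has at least two support coordinates outside $\bigcup_{k<j}\mathrm{supp}(\mathbf{a}_k)$. I would aim for $h\ge 6/\delta$ as soon as $|S|\ge q^{O(h)}$, via a pigeonhole and induction argument modeled on Lemma~\ref{lem:anchored-chain}. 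Split $S$ by the values on a coordinate; if many vectors are supported only inside a small set of already used coordinates, pass to the cheaper sub-case where there are at most $q^{|\text{used}|}$ such vectors. This forces $C_0=q^{O(1/\delta)}$, a constant depending only on $p,q$.

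\textbf{Step 3: union bound.} Fix a chain $\mathbf{a}_1,\ldots,\mathbf{a}_h$. Reveal the $\mathbf{X}_i$ in the order of first appearance in the chain. Conditioned on everything revealed before step $j$, the combination $\sum_i(\mathbf{a}_j)_i\mathbf{X}_i$ has the form $\beta\mathbf{X}+\beta'\mathbf{X}'+\mathbf{w}$, where $\mathbf{X},\mathbf{X}'$ are two of its fresh coordinates and $\mathbf{w}$ collects the rest. Any other fresh variables are averaged over, and Step 1 holds uniformly in $\mathbf{w}$. So the probability that all $h$ combinations lie in the ball is at most $q^{-\delta nh}\le q^{-6n}$. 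The number of chains is at most $q^{\ell h}=q^{o(\sqrt n)}$ by the hypothesis $\ell=o(\sqrt n)$. The total probability is therefore at most $q^{-6n+o(n)}\le q^{-5n}$ for large $n$.

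I expect Step 2 to be the main obstacle. Lemma~\ref{lem:anchored-chain} only guarantees one new coordinate per step, and it must be strengthened to two new coordinates with the loss kept polynomial in $|S|$, with length $h$ independent of $\ell$. I would first try the recursive restriction above, charging vectors that contribute only one new coordinate to a bounded number of sub-cases.
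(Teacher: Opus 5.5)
The paper does not prove this theorem; it quotes it from \cite{guruswami2010list}. Your proposal therefore has to stand on its own, and it has a genuine gap in Step 2.

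Steps 1 and 3 are sound. The two-fresh-vector estimate is correct, and sequential conditioning is valid for a $2$-increasing chain taken inside $S$ itself. The problem is that the combinatorial statement Step 2 needs is false, not merely hard to prove. For $q\ge 3$, take $S=\{\mathbf{a}\in\F_q^\ell:\mathrm{supp}(\mathbf{a})=[\ell]\}$. It has $(q-1)^\ell$ elements, all of support size $\ell\ge 2$. But after any first element the union of supports is already $[\ell]$, so $S$ has no chain of length $2$, whether $2$-increasing or even $1$-increasing. Step 2 must work for every coefficient set that could occur, and nothing excludes such patterns without already knowing the theorem. On these configurations your union bound over chains gives only $q^{-\delta n}$. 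Your proposed remedy, bounding the vectors supported on the used coordinates by $q^{|\text{used}|}$, breaks down exactly here, because a single chain element can use all $\ell$ coordinates.

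Separately, the threshold must carry a factor $\ell$ even for $q=2$. The star $\{\mathbf{e}_1+\alpha\mathbf{e}_j:\alpha\in\F_q^*,\,2\le j\le\ell\}$ has $(q-1)(\ell-1)$ elements and no $2$-increasing chain of length $2$. So ``$|S|\ge q^{O(h)}$'' has to become ``$|S|\ge q^{O(h)}\ell$''.

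The missing idea is anchoring, which is what the paper's Lemmas~\ref{lem:anchored-chain} and~\ref{lem:triangularization} do. Fix $\mathbf{a}_0\in S$ and look for the chain among the differences $\mathbf{b}_j=\mathbf{a}_j-\mathbf{a}_0$; this cancels the large common support. The cost is that $\sum_\ell(\mathbf{a}_j)_\ell\mathbf{X}_\ell=\mathbf{Y}+\sum_\ell(\mathbf{b}_j)_\ell\mathbf{X}_\ell$, where $\mathbf{Y}=\sum_\ell(\mathbf{a}_0)_\ell\mathbf{X}_\ell$ depends on variables not yet revealed, so your conditioning in Step 3 no longer applies. The repair has three parts:
\begin{enumerate}
\item Swap some $\mathbf{X}_{i_s}$ for the ball point $\mathbf{Y}$ and skip step $s$, as in Lemma~\ref{lem:triangularization}.
\item Run Step 3 as a count over tuples in $B_q^n(\mathbf{0},p)^\ell$ rather than as a conditional probability; the swap is injective and lands in $B_q^n(\mathbf{0},p)^\ell$ on the relevant event.
\item Prove an anchored chain lemma with \emph{two} new coordinates per step and threshold $q^{O(h)}\ell$.
\end{enumerate}
You cannot settle for the paper's Lemma~\ref{lem:anchored-chain}, which gives only one new coordinate per step; that is genuinely insufficient for $q^{-5n}$. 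With one fresh variable per step, a single small-weight combination can serve all later steps. For example, with the chain $\mathbf{e}_1+\mathbf{e}_2+\mathbf{e}_{j+2}$, the event that $\mathbf{X}_1+\mathbf{X}_2$ has small weight makes every step likely. The total saving is then only about $q^{-H_q(p)n}$, which is what Case~1 of Lemma~\ref{lem:affine-saturation} yields.
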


We now prove the main theorem.

\begin{thm}
\label{thm:main}
Fix a prime power $q$ and $0<p<1-1/q$. For every fixed $\varepsilon>0$ with $1-H_q(p)-\varepsilon>0,$ let $\cC\subseteq\mathbb{F}_q^n$ be a uniformly random linear code of rate $R=1-H_q(p)-\varepsilon.$ Then, with probability at least $1-q^{-\Omega_{p,q}(\varepsilon n)},$ the code $\cC$ is $\left(p,\,\left\lceil\frac{H_q(p)}{\varepsilon}\right\rceil+O_{p,q}(1) \right)$-list-decodable.
\end{thm}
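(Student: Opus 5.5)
Set $L^\ast:=\lceil H_q(p)/\varepsilon\rceil+c_{p,q}$ with $c_{p,q}$ a constant to be fixed. We show that the probability that some ball $B_q^n(\mathbf{y},p)$ contains $t:=L^\ast+1$ codewords is $q^{-\Omega(\varepsilon n)}$. If a ball contains more than $t$ codewords, it contains a subset of exactly $t$ of them. The proof of Lemma~\ref{lem:affine-reformulation} and the union bound before Lemma~\ref{lem:shifted-ball-intersection} only use that these $t$ points lie in a translate of an $r$-dimensional subspace of $\cC$, so it suffices to bound $\Pr[E_{t,r}]$ for each $r\le t-1$. Since $t$ is a constant, there are $O(1)$ values of $r$. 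We split according to the size of $r$ relative to $t$.

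\textbf{Small affine dimension.} Suppose $r+1\le t-M$, where $M=M(p,q,\eta)$ is from Lemma~\ref{lem:affine-saturation} and $\eta:=\varepsilon/2$. Every coefficient set $A$ then satisfies $|A|-(r+1)\ge M$, so Lemma~\ref{lem:affine-saturation} gives $|N(A)|\le q^{(rH_q(p)+\eta)n}$. The union bound becomes
\[
\Pr[E_{t,r}]\le q^{O(t^2)}\,q^{(rH_q(p)+\varepsilon/2)n}\,q^{-(H_q(p)+\varepsilon)rn+o(n)}=q^{-(r\varepsilon-\varepsilon/2)n+o(n)},
\]
which is $q^{-\Omega(\varepsilon n)}$ whenever $r\ge1$. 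Note that $t\ge2$ distinct points force $r\ge1$. There is a catch: $M$ depends on $\eta=\varepsilon/2$, so it is not $O_{p,q}(1)$. To keep the additive constant independent of $\varepsilon$, I would take $\eta$ to be a fixed constant depending only on $p,q$, for instance $\eta=H_q(p)/2$. The exponent is then $-(r\varepsilon-\eta)n$, which is negative once $r>\eta/\varepsilon$. The range $r\le\eta/\varepsilon$ must then be handled another way, for example by running Lemma~\ref{lem:affine-saturation} with several values of $\eta$, or by showing that when $r$ is small, $|A|$ is far above $M$ and so the saturation gain is larger. I expect this bookkeeping, namely getting an $\varepsilon$-independent additive constant, to be the main obstacle. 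My first attempt would be to note that if $r\le \eta/\varepsilon$ then $t-(r+1)\gtrsim H_q(p)/\varepsilon-\eta/\varepsilon$ is huge, and to apply Theorem~\ref{thm:GHK}: a bad list of size $\ge C_0(r+1)$ inside an $(r+1)$-dimensional span is excluded with probability $q^{-5n}$ per configuration. This handles every $r$ with $C_0(r+1)<t$.

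\textbf{Large affine dimension.} Suppose $r+1>t-M$, so $r\ge t-M\ge H_q(p)/\varepsilon+c_{p,q}-M$. Use the trivial bound $|N(A)|\le |B_q^n(\mathbf{0},p)|^{r+1}\le q^{(r+1)H_q(p)n}$. This gives
\[
\Pr[E_{t,r}]\le q^{O(t^2)}q^{\left((r+1)H_q(p)-(H_q(p)+\varepsilon)r\right)n+o(n)}=q^{(H_q(p)-r\varepsilon)n+o(n)}.
\]
With $c_{p,q}\ge M+1$ we get $r\varepsilon\ge H_q(p)+\varepsilon$, so this is $q^{-\varepsilon n+o(n)}$. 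This is exactly where the leading constant $H_q(p)/\varepsilon$ comes from.

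Combining the two regimes, we sum over the $O(1)$ pairs $(r,A)$ and absorb $q^{O(t^2)}$ and the $o(n)$ terms for large $n$. This gives failure probability $q^{-\Omega_{p,q}(\varepsilon n)}$. The resulting list size is $\lceil H_q(p)/\varepsilon\rceil+O_{p,q}(1)$, provided $M$ (and the threshold $C_0$ from Theorem~\ref{thm:GHK}) is chosen depending only on $p,q$, as arranged above.
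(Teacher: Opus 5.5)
\textbf{Gap: $\eta$ is not tied to $C_0$.} Your overall architecture matches the paper's, and your trivial-bound case $t-(r+1)<M$ is correct; it is where the $H_q(p)/\varepsilon$ comes from. But the step you flag as the main obstacle is left unresolved, and the one concrete choice you make does not close it. With $\eta=H_q(p)/2$, Lemma~\ref{lem:affine-saturation} gives a decaying bound only for $r>\eta/\varepsilon=H_q(p)/(2\varepsilon)$. Theorem~\ref{thm:GHK} covers only $C_0(r+1)<t$. That theorem guarantees only $C_0>1$, and its constant is large in general. For $C_0>2$ and small $\varepsilon$, the range $t/C_0\le r+1$, $r\le H_q(p)/(2\varepsilon)$ is nonempty, and no bound you have works there:
\begin{itemize}
\item Theorem~\ref{thm:GHK} does not apply, since $C_0(r+1)\ge t$.
\item Lemma~\ref{lem:affine-saturation} gives exponent $\eta-\varepsilon r\ge 0$.
\item The trivial bound gives $H_q(p)-\varepsilon r\ge H_q(p)/2>0$.
\end{itemize}
There is also a uniformity problem. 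Cutting the saturation regime at $r>\eta/\varepsilon$ leaves a margin $\varepsilon r-\eta$ that can be arbitrarily small compared to $\varepsilon$. So even where the lemma applies, you would not get a uniform $q^{-\Omega_{p,q}(\varepsilon n)}$ bound.

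\textbf{The fix (the paper's proof).} Choose $\eta$ after $C_0$, and use the GHK threshold itself as the cut. Take $\eta=H_q(p)/(4C_0)$, $M=M(p,q,\eta)$, and $c\ge M+C_0$; all depend only on $p,q$.
\begin{itemize}
\item If $C_0(r+1)<t$, Theorem~\ref{thm:GHK} bounds the number of bad tuples by $|B_q^n(\mathbf{0},p)|^{r+1}q^{-5n}$, and this case is done.
\item Otherwise $r+1\ge t/C_0$. Together with $t\ge H_q(p)/\varepsilon+c$ and $c\ge C_0$, this gives $\varepsilon r\ge H_q(p)/C_0$.
\item If in addition $t-(r+1)\ge M$, Lemma~\ref{lem:affine-saturation} yields exponent $\eta-\varepsilon r\le -\tfrac34 H_q(p)/C_0$, which is uniformly negative.
\item If $t-(r+1)<M$, your trivial-bound case finishes, since $c\ge M+1$.
\end{itemize}
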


\begin{proof}
Let $C_0$ be the constant in Theorem~\ref{thm:GHK}. Choose $\eta:=\frac{H_q(p)}{4C_0}$ and let $M=M(p,q,\eta)$ be the constant given by Lemma~\ref{lem:affine-saturation}. Finally, choose an integer $c>M+C_0+2$ and set $t:=\left\lceil\frac{H_q(p)}{\varepsilon}\right\rceil+c.$ We show that, with probability $1-q^{-\Omega_{p,q}(\varepsilon n)}$, no Hamming ball of radius $pn$ contains $t$ codewords of $\cC$.

Suppose that there exists a list $L$ of size $t$ with affine dimension $r$. Let
 $\mathbf{x}_1,\ldots,\mathbf{x}_{r+1}$ be the affine basis of dimension $r$. $L$ together with this affine basis determines the set $A$ such that
\[
A\subseteq\mathcal{A}_{r+1},
\qquad
|A|=t,
\qquad
\mathbf{e}_1,\ldots,\mathbf{e}_{r+1}\in A.
\]
Let $D_{\mathbf{x}}:=\mathrm{span}\{\mathbf{x}_2-\mathbf{x}_1,\ldots, \mathbf{x}_{r+1}-\mathbf{x}_1\}$ and by Lemma \ref{lem:affine-reformulation}, $D_{\mathbf{x}}\subseteq \cC$. By Lemma~\ref{lem:fixed-subspace}, $\Pr[D_{\mathbf{x}}\subseteq \cC]=q^{-(H_q(p)+\varepsilon)rn+o(n)}.$

We divide our discussion into two cases.

\medskip
\noindent
\textbf{Case 1: $r+1<\frac{t}{C_0}.$}
Let $\mathbf{x}_1,\ldots,\mathbf{x}_{r+1}$ be an affine basis defined above. Every vectors in $L$ can be written as $\sum_{i=1}^{r+1} a_i\mathbf{x}_i$ where $\sum_{i=1}^{r+1} a_i=1.$ In particular, all $t$ vectors belong to $\mathrm{span}\{\mathbf{x}_1,\ldots,\mathbf{x}_{r+1}\}.$ Hence, $\left|\mathrm{span}\{\mathbf{x}_1,\ldots,\mathbf{x}_{r+1}\}\cap B_q^n(\mathbf{0},p)\right|\ge t>C_0(r+1)$. Since $r+1\le t=O_{p,q,\varepsilon}(1)=o(\sqrt n),$ Theorem~\ref{thm:GHK} shows that the number of tuples $(\mathbf{x}_1,\ldots,\mathbf{x}_{r+1})\in B_q^n(\mathbf{0},p)^{r+1}$ satisfying this condition is at most $\left|B_q^n(\mathbf{0},p)\right|^{r+1}q^{-5n}.$

 Since $\Pr[D_{\mathbf{X}}\subseteq \cC]=q^{-(H_q(p)+\varepsilon)rn+o(n)}$, the probability that $L$ is a list of size $t$ and affine dimension $r$ with probability at most
\[
\begin{aligned}
\left|B_q^n(\mathbf{0},p)\right|^{r+1}\cdot q^{-5n}\cdot q^{-(H_q(p)+\varepsilon)rn+o(n)}\le q^{[H_q(p)(r+1)-5-(H_q(p)+\varepsilon)r]n+o(n)}=q^{-\Omega(n)}.
\end{aligned}
\]

\medskip
\noindent
\textbf{Case 2: $r+1\geq \frac{t}{C_0}.$}

 The list $L$ yields $A$ and an affinely independent tuple $\mathbf{x}=(\mathbf{x}_1,\ldots,\mathbf{x}_{r+1})\in N(A)$. Moreover, $D_{\mathbf{x}}$ must satisfy $D_{\mathbf{x}}\subseteq \cC$. Let $E_{t,r}$ be the event that $L$ is a list of size $t$ and affine dimension $r$. Since there are at most $q^{O(t^2)}$ different sets $A$, and for each fixed affinely independent tuple $\Pr[D_{\mathbf{x}}\subseteq \cC]=q^{-(H_q(p)+\varepsilon)rn+o(n)},$ a union bound gives
\[
\Pr[E_{t,r}]
\le \max_A |N(A)|\cdot 
q^{O(t^2)}\cdot q^{-(H_q(p)+\varepsilon)rn+o(n)}.
\]
We now bound $|N(A)|$ according to whether $t-(r+1)\ge M$ or $t-(r+1)<M$.

If $t-(r+1)\ge M,$ Lemma~\ref{lem:affine-saturation} gives $|N(A)|\le q^{[rH_q(p)+\eta]n}.$ Therefore,
\[
\Pr[E_{t,r}]
\le
q^{O(t^2)}
q^{(\eta-\varepsilon r)n+o(n)}.
\]
Since $t\le C_0(r+1)$ and $t=\left\lceil\frac{H_q(p)}{\varepsilon}\right\rceil+c$, $c\ge C_0$, this gives $\varepsilon r\ge \frac{H_q(p)}{C_0}.$ Hence, $\eta-\varepsilon r\le \frac{H_q(p)}{4C_0}-\frac{H_q(p)}{C_0}<0.$ Thus, the probability in this case is at most $q^{-\Omega_{p,q}(n)}.$

If $t-(r+1)<M,$ we use the trivial bound $|N(A)|\le \left|B_q^n(\mathbf{0},p)\right|^{r+1}\le q^{H_q(p)(r+1)n}.$ Therefore,
\[
\Pr[E_{t,r}]\le q^{O(t^2)} q^{(H_q(p)-\varepsilon r)n+o(n)}.
\]
Since $r\ge t-M$, $H_q(p)-\varepsilon r\le H_q(p)-\varepsilon(t-M)\leq-\varepsilon(c-M)<-\varepsilon C_0$. Hence, the probability in this case is $q^{-\Omega_{p,q}(\varepsilon n)}.$

Finally, the affine dimension satisfies $1\le r\le t-1$. Since $t$ is independent of $n$ for fixed $\varepsilon$, Therefore, $\sum_{r=1}^{t-1}\Pr[E_{t,r}]\le q^{-\Omega_{p,q}(\varepsilon n)}.$ The proof is completed.
\end{proof}

\begin{rmk}
    Fix any sufficiently small constant $\tau\in (0,p)$ such that $H_q(\tau)<\frac{H_q(p)}{16C_0}$, then the constant $M$ can be taken as $M=q^{\left\lceil\frac{H_q(p)\ln q}{2\tau^2\left(1-\frac{qp}{q-1}\right)^2}\right\rceil+1}$ which is clearly independent of $\epsilon$. 
\end{rmk}

\bibliographystyle{alpha}
\bibliography{refs}

\end{document}